\pgfplotsset{compat=1.16}
\newcommand{\lbl}{{\mvec{y}}}
\newcommand{\dataset}{\textbf{X}}
\newcommand{\ssample}{{\mvec{{x}}}}
\newcommand{\slbl}{\text{y}}
\newcommand{\usr}{\mathsf{C}}
\newcommand{\fm}{\mathcal{M}}
\newcommand{\ld}{{D}}
\newcommand{\fd}{\mathsf{D}}
\newcommand{\sss}[1]{\scriptscriptstyle #1}
\newcommand{\myparagraph}[1]{\noindent \textbf{#1.}}
\newcommand{\pitrain}{\ensuremath{\Pi_\mathsf{train}}}
\newcommand{\pistr}{\ensuremath{\Pi_\mathsf{train}}}
\newcommand{\valvec}{\ensuremath{\vb^\mathsf{val}}}
\newcommand{\Asoc}{\ensuremath{\mathcal{A}_\text{soc}}}
\newcommand{\Ap}{\ensuremath{\mathcal{A}_\text{p}}}
\newcommand{\Apsc}{\ensuremath{\mathcal{A}^{\text{p}}_\text{soc}}}
\newcommand{\srvr}{\mathcal{S}}
\newtheorem{theorem}{Theorem}
\newtheorem{definition}[theorem]{Definition}
\newcommand{\abs}[1]{| #1 |}
\newcommand{\Prob}{\ensuremath{\mathsf{Pr}}}
\newcommand{\Z}[1]{\ensuremath{\mathbb{Z}}_{2^{#1}}}
\newcommand{\Adv}{\ensuremath{\mathcal{A}}\xspace}
\newcommand{\Sim}{\ensuremath{\mathcal{S}}}
\newcommand{\share}[1]{\ensuremath{{\llbracket#1\rrbracket}}}
\newsavebox{\@brx}
\newcommand{\llangle}[1][]{\savebox{\@brx}{\(\m@th{#1\langle}\)}%
	\mathopen{\copy\@brx\kern-0.6\wd\@brx\usebox{\@brx}}}
\newcommand{\rrangle}[1][]{\savebox{\@brx}{\(\m@th{#1\rangle}\)}%
	\mathclose{\copy\@brx\kern-0.6\wd\@brx\usebox{\@brx}}}
\newcommand{\add}{\ensuremath{\mathsf{add}}}
\newcommand{\mult}{\ensuremath{\mathsf{mult}}}
\newcommand{\Sh}{\ensuremath{\mathsf{sh}}}
\newcommand{{\piaSh}}[1]{\ensuremath{\Pi^{#1}_{\Sh}}}
\newcommand{\piSh}{\ensuremath{\Pi_{\Sh}}}
\newcommand{\FSh}{\ensuremath{\mathcal{F}_{\mathsf{shr}}}}
\newcommand{\FMul}{\ensuremath{\mathcal{F}_{\mathsf{mult}}}}
\newcommand{\FOp}{\ensuremath{\mathcal{F}_{\mathsf{op}}}}
\newcommand{\FPred}{\ensuremath{\mathcal{F}_{\mathsf{pred}}}}
\newcommand{\FmPred}{\ensuremath{\mathcal{F}_{\fm\mathsf{pred}}}}
\newcommand{\FAcc}{\ensuremath{\mathcal{F}_{\mathsf{acc}}}}
\newcommand{\FComp}{\ensuremath{\mathcal{F}_{\mathsf{comp}}}}
\newcommand{\piComp}{\Pi_{\mathsf{comp}}}
\newcommand{\FAmax}{\ensuremath{\mathcal{F}_{\mathsf{amax}}}}
\newcommand{\FCount}{\ensuremath{\mathcal{F}_{\mathsf{counter}}}}
\newcommand{\Ffilter}{\ensuremath{\mathcal{F}_{\mathsf{filter}}}}
\newcommand{\FZVec}{\ensuremath{\mathcal{F}_{\mathsf{zvec}}}}
\newcommand{\FTrain}{\ensuremath{\mathcal{F}_{\mathsf{Train}}}}
\newcommand{\FpTrain}{\ensuremath{\mathcal{F}_{\mathsf{pTrain}}}}
\definecolor{lightmintbg}{rgb}{.53,.80,.92}
\tikzstyle{maldo} = [rectangle, minimum width=1.0cm, minimum height=0.2cm, text centered, draw=black, fill=orange!75]
\tikzstyle{hdo} = [rectangle,,minimum width=1.0cm, minimum height=0.25cm,text centered, draw=black, fill = lightmintbg!60]
\tikzstyle{malserver} = [rectangle,minimum width=0.6cm, minimum height=0.3cm, text centered, draw=black, fill=red!60]
\tikzstyle{hserver} = [rectangle,minimum width=0.6cm, minimum height=0.3cm, text centered, draw=black, fill=lightmintbg!60]
\tikzstyle{soc} = [rectangle, rounded corners,dashed,minimum width=3.5cm, minimum height=2.6cm, draw=black]
\tikzstyle{nota} = [rectangle, minimum width=7.5cm, minimum height= 0.4cm, font = \small, draw=black]
\tikzstyle{mdnota} = [rectangle, minimum width=0.4cm, minimum height=0.2cm, text centered,font = \small, draw=black, fill=orange!75]
\tikzstyle{msnota} = [rectangle,minimum width=0.4cm, minimum height=0.2cm, text centered, font = \small, draw=black, fill=red!60]
\tikzstyle{hnota} = [rectangle, minimum width=0.4cm, minimum height=0.2cm, text centered,font = \small, draw=black, fill=lightmintbg!60]
\tikzstyle{sarrow} = [ultra thin, <->,latex-latex]
\tikzstyle{darrow} = [thin,->,>=stealth]
\newcommand{\piAddA}{\ensuremath{\Pi_{\add}}}
\newcommand{\piMultA}{\ensuremath{\Pi_{\mult}}}
\newcommand{\mvec}[1]{\ensuremath{{\mathbf{{#1}}}}}
\newcommand{\vb}{\mvec{{b}}}
\newcommand{\vp}{\mvec{{p}}}
\newcommand{\vx}{\mvec{{x}}}
\newcommand{\vy}{\mvec{{y}}}
\newcommand{\vz}{\mvec{{z}}}
\newcommand{\threshold}{\ensuremath{\mathsf{\phi}}}
\newcommand{\piOp}{\ensuremath{\Pi}_{\mathsf{op}}}
\newcommand{\piAMX}{\ensuremath{\Pi_{\mathsf{amx}}}}
\newcommand{\piAcc}{\ensuremath{\Pi_{\mathsf{acc}}}}
\newcommand{\piZVec}{\ensuremath{\Pi_{\mathsf{zvec}}}}
\newcommand{\OB}{\ensuremath{{\mathsf{O^3B}}}}
\newcommand{\cv}[1]{{\ld_{#1}^\text{v}}}
\newcommand{\CV}{{\ld_\text{val}}}
\newcommand{\valacc}{{\text{AccVal}}}
\newcommand{\piEF}{\ensuremath{\Pi_{\mathsf{filter}}}}
\newcommand{\piPred}{\ensuremath{\Pi_{\mathsf{pred}}}}
\newcommand{\pimPred}{\ensuremath{\Pi_{\fm\mathsf{pred}}}}
\newcommand{\safenet}{\text{SafeNet }}
\newlength{\maxlen}
\setlist[description]{style=unboxed,leftmargin=0cm}
\newenvironment{myitemize}{
	\begin{list}{{$\bullet$}}{
			\setlength\partopsep{0pt}
			\setlength\parskip{0pt}
			\setlength\parsep{0pt}
			\setlength\topsep{0pt}
			\setlength\itemsep{2pt}
			\setlength{\itemindent}{0.3pt}
			\setlength{\leftmargin}{10pt}
		}
	}{
	\end{list}
}
\newcounter{itemcount}
\newcommand{\figlab}[1]{\label{fig:#1}}
	\newenvironment{boxfig*}[2]{
		\begin{figure*}[h!]		
			\fontsize{5}{5}\selectfont
			\newcommand{\FigCaption}{#1}
			\newcommand{\FigLabel}{#2}
			\vspace{-.05cm}
			\begin{center}
				\begin{small}			 
					\begin{adjustbox}{max width=\textwidth}
						\begin{tabular}{@{}|@{~~}l@{~~}|@{}}
							\hline
								\rule[-1ex]{0pt}{1ex}\begin{minipage}[b]{.95\linewidth}
									\vspace{1ex}	
								}{%
								\end{minipage}\\
								\hline
							\end{tabular}	
						\end{adjustbox}		
					\end{small}
					\vspace{-0.1cm}
					\caption{\FigCaption}
					\figlab{\FigLabel}
				\end{center}
				\vspace{-.38cm}
			\end{figure*}
		}
			\newenvironment{myboxfig*}[2]{
				\begin{figure*}[!htb]		
					\fontsize{5}{5}\selectfont
					\newcommand{\FigCaption}{#1}
					\newcommand{\FigLabel}{#2}
					\vspace{-.10cm}
					\begin{center}
						\caption{\FigCaption}
						\begin{small}			 
							\begin{adjustbox}{max width=\textwidth}
								\begin{tabular}{@{}|@{~~}l@{~~}|@{}}
									\hline
										\rule[-1ex]{0pt}{1ex}\begin{minipage}[b]{.95\linewidth}
											\vspace{1ex}	
										}{%
										\end{minipage}\\
										\hline
									\end{tabular}	
								\end{adjustbox}		
							\end{small}
							\vspace{-0.25cm}
							\figlab{\FigLabel}
						\end{center}
						\vspace{-.38cm}
					\end{figure*}
				}
				\newenvironment{titlebox}[5]
				{\mdfsetup{
						style=#2,
						innertopmargin=1.1\baselineskip,
						skipabove={\dimexpr0.7\baselineskip+\topskip\relax},
						skipbelow={1.5em},needspace=3\baselineskip,
						singleextra={\node[#3,right=10pt,overlay] at (P-|O){~{\sffamily\bfseries #1 }};},%
						firstextra={\node[#3,right=10pt,overlay] at (P-|O) {~{\sffamily\bfseries #1 }};},
						frametitleaboveskip=9em,
						innerrightmargin=5pt
					}
					\newcommand{\TitleCaption}{#4}
					\newcommand{\TitleLabel}{#5}
					\begin{mdframed}[font=\small]
						\setlist[itemize]{leftmargin=13pt}\setlist[enumerate]{leftmargin=13pt}\raggedright%
					}
					{\end{mdframed}
					\vspace{-1.5em}
					{\captionof{figure}{\normalfont \TitleCaption}\label{\TitleLabel}}
					\medskip
				}
				\tikzstyle{normal} = [thick, fill=white, text=black, draw, rounded corners, rectangle, minimum height=.7cm, inner sep=3pt]
				\tikzstyle{gray} = [thick, fill=gray!90, text=white, rounded corners, rectangle, minimum height=.7cm, inner sep=3pt]
				\newenvironment{systembox}[3]
				{ \begin{titlebox}{Functionality \normalfont #1}{roundbox}{normal}{#2}{#3}}
					{\end{titlebox}}
				\newenvironment{gsystembox}[3]
				{\vspace{\baselineskip}\begin{titlebox}{Global Functionality \normalfont #1}{roundbox}{normal}{#2}{#3}}
					{\end{titlebox}}
				\newenvironment{protocolbox}[3]
				{\begin{titlebox}{Protocol \normalfont #1}{commonbox}{normal}{#2}{#3}}
					{\end{titlebox}}
				\newenvironment{algobox}[3]
				{\begin{titlebox}{Algorithm \normalfont #1}{commonbox}{normal}{#2}{#3}}
					{\end{titlebox}}
				\newenvironment{reductionbox}[3]
				{\begin{titlebox}{Reduction \normalfont #1}{commonbox}{normal}{#2}{#3}}
					{\end{titlebox}}
				\newenvironment{gamebox}[3]
				{\begin{titlebox}{Game \normalfont #1}{commonbox}{gray}{#2}{#3}}
					{\end{titlebox}}
				\newenvironment{simulatorbox}[3]
				{\begin{titlebox}{Simulator \normalfont #1}{commonbox}{normal}{#2}{#3}}
					{\end{titlebox}}
				\newenvironment{systembox*}[3]
				{\begin{strip}
						\vspace{\baselineskip}\begin{titlebox}{Functionality \normalfont #1}{roundbox}{normal}{#2}{#3}}
						{\end{titlebox}
				\end{strip}}
				\newenvironment{gsystembox*}[3]
				{\begin{strip}
						\vspace{\baselineskip}\begin{titlebox}{Global Functionality \normalfont #1}{roundbox}{normal}{#2}{#3}}
						{\end{titlebox}
				\end{strip}}
				\newenvironment{protocolbox*}[3]
				{\begin{strip}
						\begin{titlebox}{Protocol \normalfont #1}{commonbox}{normal}{#2}{#3}}
						{\end{titlebox}
				\end{strip}}
				\newenvironment{algobox*}[3]
				{\begin{strip}
						\begin{titlebox}{Algorithm \normalfont #1}{commonbox}{normal}{#2}{#3}}
						{\end{titlebox}
				\end{strip}}
				\newenvironment{reductionbox*}[3]
				{\begin{strip}
						\begin{titlebox}{Reduction \normalfont #1}{commonbox}{normal}{#2}{#3}}
						{\end{titlebox}
				\end{strip}}
				\newenvironment{gamebox*}[3]
				{\begin{strip}
						\begin{titlebox}{Game \normalfont #1}{commonbox}{gray}{#2}{#3}}
						{\end{titlebox}
				\end{strip}}
				\newenvironment{simulatorbox*}[3]
				{\begin{strip}
						\begin{titlebox}{Simulator \normalfont #1}{commonbox}{normal}{#2}{#3}}
						{\end{titlebox}
				\end{strip}}
				\newenvironment{titlebox*}[5]
				{\mdfsetup{
						style=#2,
						innertopmargin=0.3\baselineskip,
						skipabove={1.2em},
						skipbelow={1em},needspace=3\baselineskip,
						frametitleaboveskip=5em,
						innerrightmargin=5pt
					}
					\newcommand{\TitleCaption}{#4}
					\newcommand{\TitleLabel}{#5}
					\begin{mdframed}[font=\small]
						\setlist[itemize]{leftmargin=13pt}\setlist[enumerate]{leftmargin=13pt}\raggedright%
					}
					{\end{mdframed}
					\vspace{-1.5em}
					{\captionof{figure}{\normalfont \TitleCaption}\label{\TitleLabel}}
					\medskip
				}
				\newenvironment{mysystembox*}[3]
				{\begin{strip}
						\vspace{\baselineskip}\begin{titlebox*}{Functionality \normalfont #1}{myroundbox}{normal}{#2}{#3}}
						{\end{titlebox*}
				\end{strip}}
				\newenvironment{mygsystembox*}[3]
				{\begin{strip}
						\vspace{\baselineskip}\begin{titlebox*}{Global Functionality \normalfont #1}{myroundbox}{normal}{#2}{#3}}
						{\end{titlebox*}
				\end{strip}}
				\newenvironment{myprotocolbox*}[3]
				{\begin{strip}
						\begin{titlebox*}{Protocol \normalfont #1}{mycommonbox}{normal}{#2}{#3}}
						{\end{titlebox*}
				\end{strip}}
				\newenvironment{myalgobox*}[3]
				{\begin{strip}
						\begin{titlebox*}{Algorithm \normalfont #1}{mycommonbox}{normal}{#2}{#3}}
						{\end{titlebox*}
				\end{strip}}
				\newenvironment{myreductionbox*}[3]
				{\begin{strip}
						\begin{titlebox*}{Reduction \normalfont #1}{mycommonbox}{normal}{#2}{#3}}
						{\end{titlebox*}
				\end{strip}}
				\newenvironment{mygamebox*}[3]
				{\begin{strip}
						\begin{titlebox*}{Game \normalfont #1}{mycommonbox}{gray}{#2}{#3}}
						{\end{titlebox*}
				\end{strip}}
				\newenvironment{mysimulatorbox*}[3]
				{\begin{strip}
						\begin{titlebox*}{Simulator \normalfont #1}{mycommonbox}{normal}{#2}{#3}}
						{\end{titlebox*}
				\end{strip}}
				\newcommand{\algoHead}[1]{\vspace{0.2em} \underline{\textbf{#1}} \vspace{0.3em}}
				\algnewcommand{\ExtendedState}[1]{\State
					\parbox[t]{\dimexpr\linewidth-\ALG@thistlm}{\hangindent=\algorithmicindent\strut\hangafter=3#1\strut}}
				\algnewcommand\algorithmicinput{\textbf{Input:}}
				\algnewcommand\Input{\item[\algorithmicinput]}
				\algrenewcommand{\algorithmiccomment}[1]{{\color{gray}// #1}}
				\newcommand{\xmath}[1]{\ensuremath{#1}\xspace}
				\newcommand{\Func}[1][\relax]{\xmath{\mathcal{F}_{\textsc{#1}}}}
\newcommand{\commentH}[1] {\textcolor{blue} {{\sf (}{\sl{#1}} {\sf - Harsh)}}}
\begin{document}


\author{\IEEEauthorblockN{Harsh Chaudhari\IEEEauthorrefmark{1},
Matthew Jagielski\IEEEauthorrefmark{2},
 Alina Oprea\IEEEauthorrefmark{1}}
\IEEEauthorblockA{\IEEEauthorrefmark{1}Northeastern University,
\IEEEauthorrefmark{2}Google Research}}


\IEEEoverridecommandlockouts
\makeatletter\def\@IEEEpubidpullup{6.5\baselineskip}\makeatother

\title{SafeNet: The Unreasonable Effectiveness of Ensembles in Private Collaborative Learning}

\maketitle

\begin{abstract}
Secure multiparty computation (MPC) has been proposed to allow multiple mutually distrustful data owners to jointly train machine learning (ML) models on their combined data. However, by design, MPC protocols faithfully compute the training functionality, which the adversarial ML community has shown to leak private information and can be tampered with in poisoning attacks. In this work, we argue that model ensembles, implemented in our framework called SafeNet, are a highly MPC-amenable way to avoid many adversarial ML attacks. The natural partitioning of data amongst owners in MPC training allows this approach to be highly scalable at training time, provide provable protection from poisoning attacks, and provably defense against a number of privacy attacks. We demonstrate SafeNet's efficiency, accuracy, and resilience to poisoning on several machine learning datasets and models trained in end-to-end and transfer learning scenarios. For instance, SafeNet reduces  backdoor  attack success significantly, while achieving $39\times$ faster training and $36 \times$ less communication than the four-party MPC framework of Dalskov et al.~\cite{DEK21}. Our experiments show that ensembling retains these benefits even in many non-iid settings. The simplicity, cheap setup, and robustness properties of ensembling make it a strong first choice for training ML models privately in MPC.

\end{abstract}






\section{Introduction}

Machine learning (ML) has been successful in a broad range of application areas such as  medicine, finance, and recommendation systems. Consequently, technology companies such as Amazon, Google, Microsoft, and IBM  provide machine learning  as a service (MLaaS) for ML training and prediction. In these services, data owners  outsource their ML computations to a set of more computationally powerful servers. However, in many instances, the client data used for ML training or classification is sensitive and may be subject to privacy requirements. Regulations such as GDPR, HIPAA and PCR, data sovereignty issues, and user privacy concern are common reasons preventing organizations from collecting user data and training more accurate ML models. These privacy requirements have led to the design of privacy-preserving ML training methods, including the use of secure multiparty computation (MPC). 




Recent literature in the area of MPC for ML proposes privacy-preserving machine learning (PPML) frameworks~\cite{MohasselZ17, MR18, WaghGC18, DEK20, Delphi20, WTBKMR21, DEK21, AEV21, Cerebro21,Piranha22} for training and inference of various machine learning models such as logistic regression, neural networks, and random forests. In these models, data owners outsource shares of their data to a set of servers and the servers run MPC protocols for ML training and prediction. An implicit assumption for security is that the underlying datasets provided by data owners during  training have not been influenced by an adversary. However, research in adversarial machine learning has shown that data poisoning attacks pose a high risk to the integrity of trained ML models~\cite{BNL12,JOBLR18,GLDG19,geiping2020witches}. Data poisoning becomes a particularly relevant threat in PPML systems, as multiple data owners contribute secret shares of their datasets for jointly training a ML model inside the MPC, and  poisoned samples cannot be easily detected. Furthermore, the guarantees of MPC provide privacy against an adversary observing the communication in the protocol, but does not protect against any sensitive information leaked by the model about its training set. Many privacy attacks are known to allow inference on machine learning models' training sets, and protecting against these attacks is an active area of research.





In this paper, we study the impact of these adversarial machine learning threats on standard MPC frameworks for private ML training. Our first observation is that the security definition of MPC for private ML training does not account for data owners with poisoned data. Therefore, we extend the security definition by considering an adversary who can poison the datasets of a subset of  owners, while at the same time controlling a subset of the servers in the MPC protocol. 
Under our threat model, we empirically demonstrate that poisoning attacks are a significant threat to the setting of private ML training. 
We show the impact of backdoor~\cite{GLDG19,CLLLS17} and targeted~\cite{koh2017understanding,geiping2020witches} poisoning attacks on four MPC frameworks and five datasets, using logistic regression and neural networks models. We show that with control of just a single owner and its dataset (out of a set of 20 owners contributing data for training), the adversary achieves $100\%$ success rate  for a backdoor attack, and higher than $83\%$ success rate for a targeted attack. These attacks are stealthy and cannot be detected by simply monitoring standard ML accuracy metrics.


To mitigate these attacks, we apply ensembling technique from ML, implemented in our framework called SafeNet, which, in the collaborative learning setting we consider, is an effective defense against poisoning attacks, while also simultaneously preventing various types of privacy attacks. Rather than attempting to implement an existing poisoning defense in MPC, we observe that the structure of the MPC threat model permits a more general and efficient solution. Our main insight is to require individual data owners to train ML models locally, based on their own datasets, and secret share the resulting ensemble of models in the MPC. We filter out local models with low accuracy on a validation dataset, and use the remaining models to make predictions using a majority voting protocol performed inside the MPC. While this permits stronger model poisoning attacks, the natural partitioning of the MPC setting prevents an adversary from poisoning more than a fixed subset of the models, resulting in a limited number of poisoned models in the ensemble. We perform a detailed analysis of the robustness properties of SafeNet, and provide lower bounds on the ensemble's accuracy based on the error rate on the local models in the ensemble and the number of poisoned models, as well as a prediction certification procedure for arbitrary inputs. The bounded contribution of each local model also gives a provable privacy guarantee for SafeNet. Furthermore, we show empirically that SafeNet successfully mitigates backdoor and targeted poisoning attacks, while retaining high accuracy on the ML prediction tasks. In addition, our approach is efficient, as ML model training is performed locally by each data owner, and only the ensemble filtering and prediction protocols are performed in the MPC. This provides large performance improvements in ML training compared to existing PPML frameworks, while simultaneously mitigating poisoning attacks. For instance, for one neural network model, SafeNet performs training $39\times$ faster than the \cite{DEK21} PPML  protocol and requires $36 \times$ less communication. Finally, we investigate settings with diverse data distributions among owners, and evaluate the accuracy and robustness of SafeNet under multiple data imbalance conditions.

To summarize, our contributions are as follows:

\smallskip
\myparagraph{Adversarial ML-aware Threat Model for Private Machine Learning} We extend the MPC security definition for private machine learning to encompass the threat of data poisoning attacks and privacy attacks. In our threat model, the adversary can poisoned a subset $t$ out of $m$  data owners, and control  $T$ out of $N$ servers participating in the MPC. The attacker might also seek to learn sensitive information about the local datasets through the trained model. 

\smallskip
\myparagraph{SafeNet Ensemble Design} We propose SafeNet, which adapts  ensembling technique from ML to the collaborative MPC setting by having data owners train models locally and aggregation of predictions is performed securely inside the MPC. We show that this procedure gives provable privacy and security guarantees, which improves as models become more accurate. We also propose various novel extensions to this ensembling strategy which make SafeNet applicable to a wider range of training settings (including transfer learning and accommodating computationally restricted owners). SafeNet's design is agnostic to the underlying MPC framework and we show it can be instantiated over four different MPC frameworks, supporting two, three and four servers.


\smallskip
\myparagraph{Comprehensive Evaluation} We show the impact of existing backdoor and targeted poisoning attacks on several existing PPML systems~\cite{DSZ15,AFLNO16,DEK21} and five datasets, using logistic regression and neural network models. We also empirically demonstrate the resilience of SafeNet against these attacks, for an adversary compromising up to 9 out of 20 data owners. We report the gains in training time and communication cost for SafeNet compared to existing PPML frameworks. Finally, we compare SafeNet with state-of-the-art defenses against poisoning in federated learning~\cite{CJG21} and show its enhanced certified robustness even under non-iid data distributions. 






\section{Background and Related Work}

We provide background on secure multi-party computation and poisoning attacks in ML, and discuss related work in the area of adversarial ML and MPC. 

\subsection{Secure Multi-Party Computation} 

Secure Multi-Party Computation (MPC)~\cite{Yao82,BGW88,GMW87,IKNP03,DPSZ12}  allows a set of $n$ mutually distrusting parties to compute a joint function $f$, so that collusion of any $t$ parties cannot modify the  output of computation (\emph{correctness}) or learn any information beyond what is revealed by the output (\emph{privacy}). The area of  MPC can be  categorized into honest majority \cite{BGW88,MRZ15,AFLNO16,CCPS19, BCPS20} and dishonest majority \cite{Yao82,DPSZ12,SPDZ2,MohasselF06a,GMW87}. The settings of 
two-party computation (2PC) \cite{Yao82,LindellP07,Lindell16,NielsenO16}, three parties (3PC) \cite{ABFLLNOWW17,AFLNO16,MRZ15}, and four parties (4PC) \cite{IshaiKKP15, GordonR018,CRS20, DEK21} have been widely studied as they provide efficient protocols. 
Additionally, recent works in the area of privacy preserving ML propose training and prediction frameworks~\cite{MohasselZ17, MR18, WaghGC18, KRCGRS20, RRKCGRS20, WTBKMR21, AEV21, PSSY21} built on top of the above MPC settings. Particularly, most of the frameworks are deployed in the outsourced computation setting where the data is secret-shared to a set of servers which perform training and prediction using MPC.     


\subsection{Data Poisoning Attacks} \label{sec:DP}

In a data poisoning attack, an adversary controls a subset of the training dataset, and uses this to influence the model trained on that training set. 
In a backdoor attack~\cite{NKS06, GLDG19, CLLLS17}, an adversary seeks to add a ``trigger'' or backdoor pattern into the model. The trigger is a perturbation in feature space, which is applied to poisoned samples in training to induce misclassification on backdoored samples at testing. In a targeted attack~\cite{koh2017understanding, koh2018stronger, suciu2018does}, the adversary's goal is to change the classifier prediction for a small number of specific test samples. Backdoor  and targeted attacks can be difficult to detect, due to the subtle impact they have on the ML model.



\subsection{Related Work}
While both MPC and adversarial machine learning have been the topic of fervent research, work connecting them is still nascent. We are only aware of several recent research papers that attempt to bridge these areas. Recent works \cite{lehmkuhl2021muse, CGOS22} show that MPC algorithms applied at test time can be compromised by malicious users, allowing for  efficient \emph{model extraction} attacks. Second, Escudero et al.~\cite{escudero2021adversarial} show that running a semi-honest MPC protocol with malicious parties can result in backdoor attacks in the resulting SVM model. Both these works, as well as our own, demonstrate the difficulty of aligning the guarantees of MPC with the additional desiderata of adversarial machine learning. We demonstrate the effectiveness of data poisoning attacks in MPC for neural networks and logistic regression models, and propose a novel ensemble training algorithm in SafeNet to defend against poisoning attacks in MPC.

Model ensembles have been proposed as a defense for ML poisoning and privacy attacks in prior work in both the centralized training setting~\cite{biggio2011bagging, jia2020intrinsic} and the collaborative learning setting. Compared to centralized approaches, which process a single dataset, we are able to leverage the trust model of MPC, which limits the number of poisoned models in the ensemble and can provide stronger robustness and privacy guarantees. Ensembles have also been proposed in MPC to protect data privacy~\cite{choquettechoo2021capc} and in federated learning to provide poisoning robustness~\cite{CJG21}. Our work provides a stronger privacy analysis, protecting from a broader range of threats than \cite{choquettechoo2021capc}, and additionally offers robustness guarantees. We provide a more detailed comparison with these approaches in Section~\ref{sec:otherensembles}.


\section{SafeNet: Using Ensembles in MPC}
We describe here our threat model and show how to implement ensembles in  MPC. We then show that ensembling gives us provable robustness to poisoning and privacy adversaries.

\subsection{Threat Model}

{
	\begin{figure}[t ]
		\vspace{-1.5mm}	
		\begin{tikzpicture}[node distance= 0.7cm]
			\node(maldo1)[maldo]{ \footnotesize $\usr_1$ };
			\node(maldoi-1)[maldo, below of=maldo1]{\footnotesize $\usr_t$ };
			\node(hdoi)[hdo, below of=maldoi-1]{\footnotesize $\usr_{t+1}$ };
			\node(hdom)[hdo, below of=hdoi]{\footnotesize $\usr_m$ };
			
			\node(mals1)[malserver, right of = maldo1, xshift=4.6cm, yshift = -0.3cm]{\footnotesize S$_1$};
			
			\node(mals2)[malserver, right of = mals1, yshift = -0.8cm]{\footnotesize S$_2$};
			
			\node(malsT)[malserver, below  of = mals1, yshift = -1.0cm]{ \footnotesize S$_{\scaleto{T}{3pt}}$};
			
			\node(hsT1)[hserver, left  of = malsT, xshift = -0.4cm]{S$_{\footnotesize \scaleto{T+1}{3pt}}$};
			
			\node(hsN)[hserver, xshift = -0.4cm , left of = mals1]{S$_{\footnotesize \scaleto{N}{3pt}}$};
			
			\node(hsN1)[hserver, yshift = -0.8cm , left of = hsN]{S$_{\footnotesize \scaleto{N-1}{3pt}}$};
			
			\node(soc)[soc, left of = hsN, label= {\footnotesize SOC Paradigm}, xshift = 1.25cm, yshift = -0.85cm]{};
			
			\node(nota)[nota, below of = soc, xshift = - 2.1cm, yshift = -1.1cm]{};
			
			\node(mdnota)[mdnota, label = right: {\footnotesize Poisoned}, left of = nota, xshift = -2.3cm]{};
			\node(msnota)[msnota, right of = mdnota, label = right: {\footnotesize Corrupted}, xshift = 1.78cm]{};
			\node(hnota)[hnota, right of = msnota,label = right: {\footnotesize Honest}, xshift = 1.8cm]{};
			
			\path (maldo1) -- node[auto=false]{\ldots} (maldoi-1);
			\path (hdoi) -- node[auto=false]{\ldots} (hdom);
			\path (hdoi) -- node[auto=false]{\ldots} (hdom);
			
			\path[draw, densely dotted,  bend left] (mals2) edge (malsT);
			\path[draw, densely dotted,  bend left] (hsT1) edge (hsN1);
			
			\draw[sarrow](mals1) -- (mals2);
			\draw[sarrow](mals1) -- (malsT);
			\draw[sarrow](mals1) -- (hsT1);
			\draw[sarrow](mals1) -- (hsN1);
			\draw[sarrow](mals1) -- (hsN);
			
			\draw[sarrow](mals2) -- (malsT);
			\draw[sarrow](mals2) -- (hsT1);
			\draw[sarrow](mals2) -- (hsN1);
			\draw[sarrow](mals2) -- (hsN);
			
			\draw[sarrow](malsT) -- (hsT1);
			\draw[sarrow](malsT) -- (hsN1);
			\draw[sarrow](malsT) -- (hsN);
			
			\draw[sarrow](hsT1) -- (hsN1);
			\draw[sarrow](hsT1) -- (hsN);
			
			\draw[sarrow](hsN1) -- (hsN);
			
			\draw[darrow](maldo1) -- (soc);
			\draw[darrow](maldoi-1) -- (soc);
			\draw[darrow](hdoi) -- (soc);
			\draw[darrow](hdom) -- (soc);
		\end{tikzpicture}
		\caption{Threat model considered in our setting. The adversary $\Apsc$ can poison at most $t$ out of $m$ data owners and corrupt at most $T$ out of $N$ servers participating in the MPC computation. $\usr_i$ and $\srvr_j$ denote the $i^{th}$ data owner and $j^{th}$ server.}
		\label{fig:AS}
		\vspace{-1.5mm}
\end{figure}
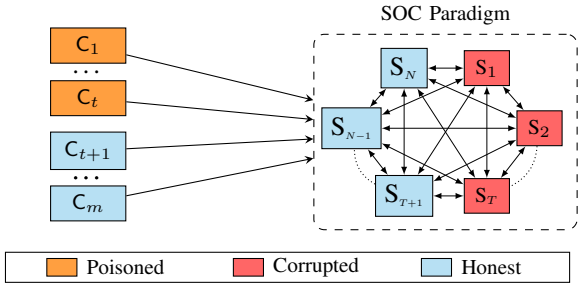}

\myparagraph{Setup} We consider a set of $m$ data owners $C= \cup_{k=1}^{m} \usr_k$ who wish to train a joint machine learning model $\fm$ on their combined dataset $\fd =  \cup_{k=1}^{m} \ld_k$. We adopt the Secure Outsourced Computation (SOC) paradigm~\cite{MohasselZ17, MR18, WaghGC18, BCPS20, RRKCGRS20, WTBKMR21, AEV21, DEK20, DEK21} for training model $\fm$ privately, where the owners secret-share their respective datasets to a set of  outsourced servers, who execute the MPC protocols to train $\fm$. The final output   is a trained model in secret-shared format among the servers.  
A single training/testing sample is expressed as $(\ssample_i, \slbl_i)$, where $\ssample_i$ is the input feature vector and $\slbl_i$ is its corresponding true label or class. We use $\ld_k = (\dataset_{\sss{k}}, \lbl_{\sss{k}})$ to denote dataset of data owner $\usr_k$ participating in the training process. Matrix $\dataset_{\sss{k}}$ denotes a feature matrix where the number of rows represent the total training samples possessed by $\usr_k$ and $\lbl_{\sss{k}}$ denotes the corresponding vector of true labels.

\smallskip
\myparagraph{Adversary in the SOC} Given  a set $S = \{\srvr_1,\ldots,\srvr_N\}$ of servers, we define an adversary $\Asoc$,  similar to prior work~\cite{MohasselZ17, MR18, RRKCGRS20, WTBKMR21, AEV21, DEK21}.  $\Asoc$ can statically corrupt a  subset $S_T \subset S$ of servers of size at most $T<N$. The exact values of $N$ and $T$ are dependent on the  MPC protocols  used for training the ML model privately. We experiment with two-party, three-party, and four-party protocols with one corrupt server. MPC defines two main adversaries: i) \emph{Semi-honest}: Adversary follows a given protocol, but tries to derive additional information from the messages received from other parties during the protocol; ii) \emph{Malicious}: Adversary has the ability to arbitrarily deviate during the execution of the protocol. 

\smallskip
\myparagraph{Security Definition} MPC security is defined using the real world - ideal world paradigm~\cite{C00}.
In the real world, parties participating in the MPC interact during the execution of a protocol $\pi$ in presence of an adversary $\Adv$. Let $\mathsf{REAL}[\mathbb{Z}, \Adv, \pi, \lambda]$ denote the output of the environment $\mathbb{Z}$ when interacting with $\Adv$ and the honest parties, who execute $\pi$ on security parameter $\lambda$. Effectively, $\mathsf{REAL}$ is a function of the inputs/outputs and messages sent/received during the protocol. In the ideal world, the parties simply forward their inputs to a trusted functionality $\mathcal{F}$ and forward the functionality's response to the environment.
Let $\mathsf{IDEAL}[\mathbb{Z}, \Sim, \mathcal{F}, \lambda]$ denote the output of the environment $\mathbb{Z}$ when interacting with adversary $\Sim$ and honest parties who run the  protocol in presence of $\mathcal{F}$ with security parameter $\lambda$. The security definition states that the views of the adversary in the real and ideal world are indistinguishable: 

\begin{definition} \label{def:mpc_sec}
	A  protocol $\pi$ securely realizes  functionality $\mathcal{F}$   if for all environments $\mathbb{Z}$ and any adversary of type $\Asoc$, which  corrupts a subset $S_T$ of servers of size at most $T < N$ in the real world, then there exists a simulator $\Sim$ attacking the ideal world, such that $\mathsf{IDEAL}[\mathbb{Z}, \Sim, \mathcal{F}, \lambda] \approx \mathsf{REAL}[\mathbb{Z}, \Asoc, \pi, \lambda]$.
\end{definition}

\smallskip
\myparagraph{Poisoning Adversary} Existing threat models for training ML models privately assume that the local datasets contributed towards training  are not under the control of the adversary. However, data poisoning attacks have been shown to be a real threat when ML models are  trained on crowdsourced data or data coming from untrusted sources~\cite{BNL12,GBDPLR17,JOBLR18}. Data poisoning becomes a particularly relevant risk in PPML systems, in which data owners contribute their own datasets for training a joint ML model. Additionally, the datasets are secret shared among the servers participating in the MPC, and potential poisoned samples (such as backdoored data) cannot be easily detected by the servers running the MPC protocol.

To account for such attacks, we define a poisoning adversary $\Ap$  that can  poison  a subset of local datasets of size at most $t <m$. Data owners with poisoned data are called {\em poisoned owners}, and we assume that the adversary can coordinate with the poisoned owners to achieve a certain adversarial goal. For example, the adversary can mount a backdoor attack, by selecting a backdoor pattern and poison the datasets under its control with the particular backdoor pattern.

{\em Poisoning Robustness:} We consider an ML model to be robust against a poisoning adversary $\Ap$, who poisons the datasets of $t$ out of $m$ owners, if it generates correct class predictions on new samples with high probability. We provide bounds on the level of poisoning tolerated by our designed framework to ensure robustness.


\smallskip
\myparagraph{Our Adversary} We now define a new adversary $\Apsc$  for our threat model (Figure~\ref{fig:AS}) that corrupts servers in the MPC and poisons the owners' datasets:  
\begin{myitemize}
	\smallskip
	\item[--] $\Apsc$ plays the role of $\Ap$ and poisons $t$ out of $m$ data owners that secret share their training data to the  servers.
	\item[--] $\Apsc$ plays the role of $\Asoc$ and corrupts $T$ out $N$ servers taking part in the MPC computation.

\end{myitemize}

\smallskip
Note that the poisoned owners that $\Apsc$ controls do not interfere in the execution of the MPC protocols after secret-sharing their data and also do not influence the honest owners. 

\smallskip
\myparagraph{Functionality  $\FpTrain$} Based on our newly introduced threat  model, we construct a new functionality $\FpTrain$ in Figure 2 to accommodate poisoned data. 

\begin{systembox}{$\FpTrain$}{Ideal Functionality for ML training with data poisoning}{func:pml}   
	\label{fig:ideal_func}
	
	\algoHead{Input:} $\FpTrain$ receives secret-shares of $\ld_i$ and $a_i$ from each owner $\usr_i$, where $\ld_i$ is a dataset and  $a_i$  an auxiliary input.
	
	\algoHead{Computation: } On receiving inputs from the owners, $\FpTrain$ computes $ O = f(\ld_1, . . . , \ld_m,a_1, \ldots,a_m)$, where $f$ and $O$ denotes the training algorithm and the output of the algorithm respectively.
	
	\algoHead{Output:} $\FpTrain$ constructs secret-shares of $O$ and sends the appropriate shares to the servers.

\end{systembox}

\smallskip
\myparagraph{Security against $\Apsc$} A training protocol $\pistr$ is secure against adversary  $\Apsc$  if: (1)  $\pistr$ securely realizes functionality $\FpTrain$ based on Definition~\ref{def:mpc_sec}; and (2)  the   model trained inside the MPC provides poisoning robustness against data poisoning attacks.

Intuitively, the security definition ensures that $\Apsc$ learns no information about the honest owners' inputs when $T$ out of $N$ servers are controlled by the adversary, while the  trained model provides poisoning robustness  against a subset of $t$ out of $m$ poisoned owners.

\smallskip
\myparagraph{Extension to Privacy Adversary}
While MPC guarantees no privacy leakage during the execution of the protocol, it makes no promises about privacy leakage that arises by observing the output of the protocol. This has motivated a combination of differential privacy guarantees with MPC algorithms, to protect against privacy leakage for both the intermediate execution as well as the output of the protocol. For this reason, we also consider adversaries seeking to learn information about data owners' local datasets by observing the output of the model, as done in membership inference \cite{shokri2017membership, yeom2018privacy, LiRA} and property inference attacks \cite{Ganju18, Zhang21, DistributionInference}. Recent works have used data poisoning as a tool to further increase privacy leakage~\cite{TruthSerum,Chase21,SNAP22} of the trained models.
Consequently, we can extend our threat model to accommodate a stronger version of $\Apsc$ that is also capable of performing privacy attacks by observing the output of the trained model.


\subsection{\safenet Overview}\label{sec:SNoverview}
\begin{figure*}[t] 
	\includegraphics[scale=0.3]{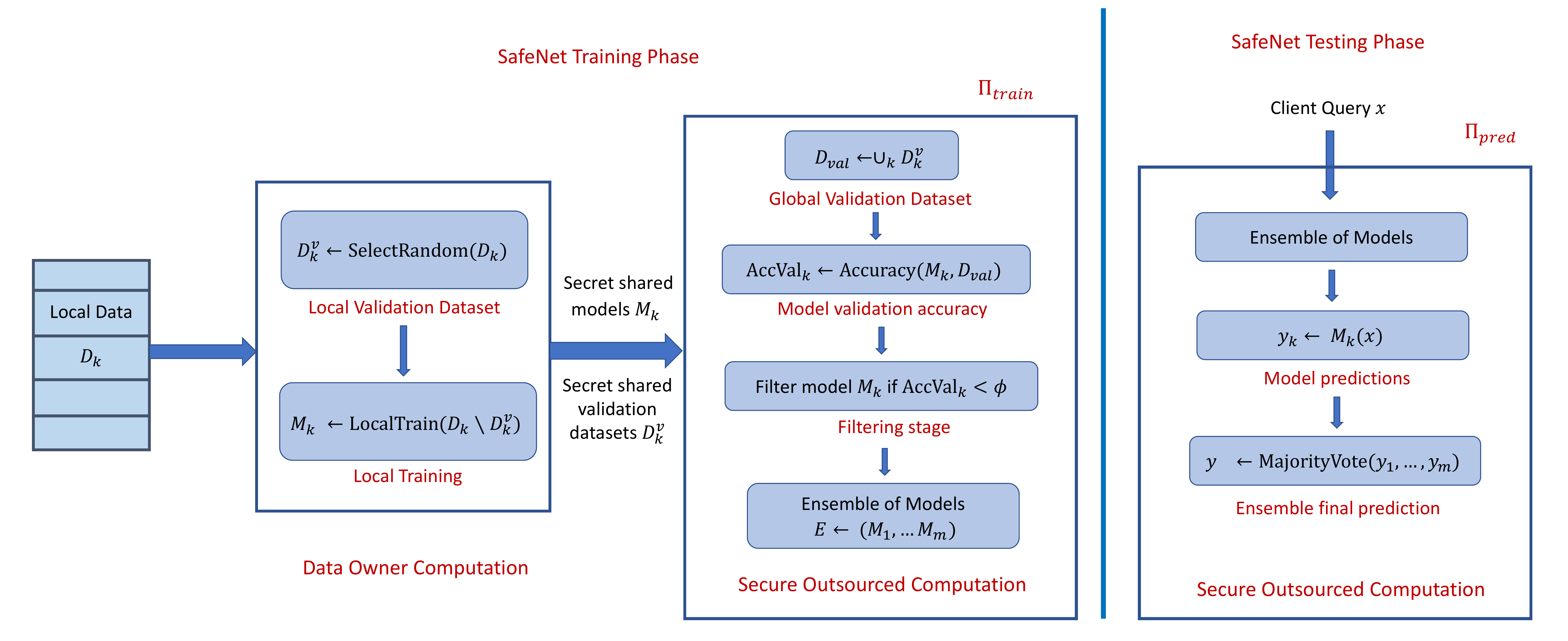}
	\caption{Overview of the Training and Inference phases of the \safenet Framework.}
	\label{fig:EF}
\end{figure*}

Given our threat model in Figure~\ref{fig:AS}, existing PPML  frameworks provide security against an $\Asoc$ adversary, but they are not designed to handle an  $\Apsc$ adversary. We show experimentally in Section~\ref{sec:exp} that PPML frameworks for private training are susceptible to data poisoning attacks.  While it would be possible to remedy this by implementing specific poisoning defenses (see Section~\ref{sec: otherDefenses} for a discussion of these approaches), we instead show that it is possible to take advantage of the bounded poisoning capability of $\Apsc$ to design a more general and efficient defense. Intuitively, existing approaches train a single model on all local datasets combined, causing the model's training set to have a large fraction of poisoned data ($t/m$), which is difficult to defend against. Instead, we design SafeNet, a new protocol which uses ensemble models to realize our threat model and provide security against $\Apsc$. In addition to successfully mitigating data poisoning attacks, SafeNet provides more efficient training than existing PPML and comparable prediction accuracy.


Figure~\ref{fig:EF} provides an overview of the training and inference phases of SafeNet. \safenet trains an ensemble $E$ of multiple models in protocol $\pistr$, where each model $\fm_k \in E$ is trained locally by the data owner $\usr_k$ on their dataset. This partitioning prevents poisoned data from contributing to more than $t$ local models. Each data owner samples a local validation dataset and trains the local model $\fm_k$ on the remaining data. The local models and validation datasets are secret shared to the outsourced servers. We note that this permits arbitrarily corrupted models, and poisoned validation datasets, but SafeNet's structure still allows it to tolerate these corruptions. In the protocol running inside the MPC, the servers jointly implement a filtering stage for identifying models with low accuracy on the combined validation data  (below a threshold $\phi$) and excluding them from the ensemble. The output of training is a secret share of each model in the trained ensemble $E$. 



In the inference phase, SafeNet implements protocol $\piPred$, to compute the prediction $y_k$ of each shared model $\fm_k$ on test input $x$ inside the MPC. The servers  jointly perform majority voting to determine the most common predicted class $y$ on input $x$, using only the models which pass the filtering stage. An optional feature of SafeNet is to add noise to the majority vote to enable user-level differential privacy protection, in addition to poisoning robustness. 



Our \safenet protocol leverages our threat model, which assumes that only a set of at most $t$ out of $m$ data owners are poisoned. This ensures that an adversary only influences a limited set of models in the ensemble, while existing training protocols would train a single poisoned global model. We provide bounds for the exact number of poisoned owners $t$ supported by our ensemble in Theorem~\ref{thm:LB}. Interestingly, the bound depends on the number of data owners $m$, and the maximum error made by a clean model in the ensemble.  The same theorem also lower bounds the probability that the ensemble predicts correctly under data poisoning performed by the $t$ poisoned owners, and we validate experimentally that, indeed, \safenet provides resilience to stealthy data poisoning attacks, such as backdoor and targeted attacks. Another advantage of \safenet is that the training time to execute the MPC protocols in the SOC setting is drastically reduced as each $\fm_k \in E$ can be trained locally by the respective owner. 
We detail below the algorithms for training and inference in SafeNet.

\subsection{\safenet Training and Inference} \label{sec:train}

To train the ensemble in SafeNet, we present our proposed ensemble method in Algorithm 1. We  discuss the realization in MPC  later in Appendix~\ref{sec:mpc_inst}. Each owner $\usr_k$ separates out a subset of its training dataset $\cv{k} \in \ld_k$ and then trains its model $\fm_k$ on the remaining dataset $\ld_k \setminus \cv{k}$. The trained model $\fm_k$ and validation dataset $\cv{k}$ is then secret-shared to the servers. The combined validation dataset is denoted as $\CV = \bigcup\limits_{i=1}^{m} \cv{i}$. We assume that all users contribute equal-size validation sets to $\CV$. During the filtering stage inside the MPC, the validation accuracy $\valacc$ of each model is jointly computed on $\CV$. If the resulting accuracy for a model is below threshold $\threshold$, the model is excluded from the  ensemble.

The  filtering step is used to separate the models with low accuracy, either contributed by a poisoned owner, or by an owner holding non-representative data for the prediction task.
  Under the assumption that the majority of owners are honest, it follows that the majority of validation samples are correct. If  $\usr_k$ is honest, then the corresponding $\fm_k$ should have a high validation accuracy on $\CV$, as the corresponding predicted outputs would most likely agree with the samples in $\CV$. In contrast, the predictions by a poisoned model $\fm_k$ will likely not match the samples in $\CV$.  In Appendix \ref{sec:SafeNetAnalysisProof}, we compute a lower bound on the size of the validation dataset as a function of  the number of poisoned owners $t$ and filtering threshold $\threshold$, such that all clean models pass the filtering stage with high probability even when a subset of the cross-validation dataset $\CV$ is poisoned. 

Given protocol $\pistr$ that securely realizes Algorithm 1 inside the MPC (described  in Appendix~\ref{sec:mpc_inst}), we argue security as follows:

\begin{restatable}{theorem}{SPtrain} \label{thm:SNtrain}
Protocol $\pistr$ is secure against  adversary $\Apsc$ who poisons $t$ out of $m$ data owners and  corrupts $T$ out of $N$ servers.
\end{restatable}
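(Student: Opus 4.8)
The plan is to unpack the two-part notion of ``security against $\Apsc$'' stated just before the theorem and dispatch each part to a separate argument, exploiting the fact that the poisoning capability and the server-corruption capability of $\Apsc$ interact only through the inputs fed to the protocol. Recall that $\pistr$ is secure against $\Apsc$ exactly when (1) it securely realizes $\FpTrain$ in the sense of Definition~\ref{def:mpc_sec}, and (2) the resulting ensemble is poisoning-robust. I would establish these two claims independently.

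For part (1), the key structural observation is that the poisoned owners controlled by $\Apsc$ (acting as $\Ap$) only secret-share their datasets $\ld_k$ and auxiliary inputs, and then take no further part in the protocol, nor do they influence the honest owners. Hence, from the vantage point of the MPC computation, their adversarially chosen (possibly poisoned) data is indistinguishable from any other input, which is precisely what the ideal functionality $\FpTrain$ already captures by accepting arbitrary secret-shared datasets and auxiliary inputs from every owner. Consequently, establishing MPC security reduces to showing that $\pistr$ securely realizes $\FpTrain$ against the server-corrupting adversary $\Asoc$ that corrupts $T$ out of $N$ servers. I would construct a simulator $\Sim$ in the standard way: $\Sim$ emulates the $T$ corrupt servers' view by sampling their shares of each owner's input (which are indistinguishable from uniform by the hiding property of the secret-sharing scheme), simulating each sub-protocol invoked in the filtering stage of Algorithm~1 via the simulators guaranteed by the underlying MPC building blocks, and finally handing out output shares consistent with the shares of $O$ produced by $\FpTrain$. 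Indistinguishability $\mathsf{IDEAL}[\mathbb{Z}, \Sim, \FpTrain, \lambda] \approx \mathsf{REAL}[\mathbb{Z}, \Apsc, \pistr, \lambda]$ then follows from the composition of the secure components.

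For part (2), the poisoning robustness of the trained ensemble is exactly the quantitative guarantee supplied by Theorem~\ref{thm:LB}: under the assumption that at most $t$ of the $m$ owners are poisoned, that theorem both bounds the tolerable $t$ and lower-bounds the probability that the ensemble outputs the correct class. Invoking it satisfies the second requirement, so combining the two parts yields security against $\Apsc$.

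The main obstacle is not the simulation itself---which is routine once the building blocks are secure---but rather justifying that the two adversarial capabilities can be cleanly separated. The crux is the observation that $\Apsc$'s poisoning power manifests only as a choice of inputs already quantified over by $\FpTrain$, so that maliciously chosen models or poisoned validation data injected at input time grant the corrupt servers no advantage during protocol execution beyond what the $\Asoc$ simulator already accounts for. The genuinely technical content---how large the validation set $\CV$ must be, and how the error rates of clean models control the number of tolerable poisoned owners---is isolated in Theorem~\ref{thm:LB}, and I would treat that result as a black box in this proof.
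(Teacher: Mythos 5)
Your proposal is correct and follows essentially the same route as the paper's own proof: the paper likewise splits the claim into (1) a simulation-based argument that $\pistr$ securely realizes $\FpTrain$, in which the simulator plays the honest servers with uniformly random shares and invokes the simulators of the sub-protocols $\piZVec$, $\pimPred$, $\piAcc$, and $\piComp$, with indistinguishability following by composition, and (2) an appeal to Theorem~\ref{thm:LB} for poisoning robustness of the output ensemble. Your explicit observation that the poisoning power of $\Apsc$ surfaces only as a choice of inputs already quantified over by $\FpTrain$ is exactly the separation the paper relies on (stated there as the poisoned owners not interfering after secret-sharing), so the two arguments coincide.
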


\noindent
The proof of the theorem will be given in Appendix~\ref{app:secproof} after we introduce the details of MPC instantiation and how  protocol $\pistr$ securely realizes $\FpTrain$ in Appendix \ref{sec:sectrain}. 

During inference, the prediction of each model $\fm_k$ is generated and the servers aggregate the results to perform majority voting. Optionally, differentially private noise is added to the sum to offer user-level privacy guarantees.  The secure inference protocol \piPred\ in MPC and its proof of security is given in Appendix~\ref{sec:mpc_inst} and \ref{app:secproof} respectively.

\begin{algorithm}[t]
	
	\caption*{{\bf Algorithm 1} SafeNet Training Algorithm} \label{alg:ensemblefiltering}
	\begin{algorithmic}
		
			\State Input: $m$ data owners, each owner $\usr_k$'s   dataset $\ld_k$. 	
			 
			 \smallskip
			 \noindent
			 \Comment{\footnotesize Owner's local computation in plaintext format} 
			\State -- For $k \in [1,m]:$	
			\begin{itemize}
				\item[-] Separate out $\cv{k}$ from $\ld_k$. Train $\fm_k$ on $\ld_k \setminus \cv{k}$.
				\item[-] Secret-share $\cv{k}$ and $\fm_k$ to servers.
			\end{itemize}
		
		
		    \smallskip
		    \noindent
			\Comment{\footnotesize MPC computation in secret-shared format}
			\State -- Construct a common validation dataset $\CV = \cup_{i=1}^{m} \cv{i}$.
			
			\State -- Construct ensemble of models $ E = \lbrace \fm_i \rbrace_{i=1}^{m}$
		
			\State -- Initialize a vector $\valvec$ of zeros and of size $m$. 
		
			\State  -- For $k \in [1,m]:$ \Comment{\footnotesize Ensemble Filtering}
			\begin{itemize}
				\item[-] $\valacc_k = Accuracy(\fm_k, \CV)$ 
			
			
			
				\item[-] If $\valacc_k> \threshold$: Set $\valvec_k = 1$  
				
			
			\end{itemize}
			\State \Return $E$ and $\valvec$
		\end{algorithmic}
\label{alg:train}
\end{algorithm}

\subsection{SafeNet Analysis} \label{sec:Analysis}
Here, we demonstrate the accuracy, poisoning robustness and privacy guarantees that SafeNet provides.  
We first show how to lower bound SafeNet's test accuracy given that each clean model in the ensemble reaches a certain accuracy level. We also give certified robustness and user-level privacy guarantees. All of our guarantees improve as the individual models become more accurate, making the ensemble agree on correct predictions more frequently.


\smallskip
\myparagraph{Robust Accuracy Analysis} 
We provide lower bounds on SafeNet accuracy, assuming that at most $t$  out  $m$ models in  the SafeNet ensemble $E$ are poisoned, and the clean models have independent errors, with maximum error rate $p < 1-\threshold$, where $\threshold$ is the filtering threshold. 


\smallskip
\noindent
{\bf Theorem. (Informal)} {\em  Let  $\Apsc$ be an adversary who poisons at most $t$ out of $m$ data owners and corrupts $T$ out of $N$ servers. Assume that the filtered ensemble $E$ has at least $m-t$ clean models, each with a maximum error rate of $p < 1-\threshold$. If the number of  poisoned owners is at most $\frac{m(1-2p)}{2(1-p)}$, ensemble $E$ correctly classifies new samples with high probability, which is a function of  $m$, $\phi$, $t$ and $p$.}

The formal theorem and the corresponding proof can be found in Appendix \ref{sec:SafeNetAnalysisProof}. 


\smallskip
\myparagraph{Poisoning Robustness Analysis}
Our previous theorem demonstrated that SafeNet's accuracy on in-distribution data is not compromised by poisoning. Now, we show that we can also certify robustness to poisoning on a per-sample basis for arbitrary points, inspired by certified robustness techniques for adversarial example robustness~\cite{cohen2019certified}. In particular, Algorithm 2 describes a method for certified prediction against poisoning, returning the most common class $y$ predicted by the ensemble on a test point $x$, as well as a bound on the number of poisoning owners $t$ which would be required to modify the predicted class.
%
\begin{algorithm}
	\begin{algorithmic}
		\smallskip
		\State Input: $m$ data owners; Ensemble of models $E = \lbrace \fm_i \rbrace_{i=1}^{m}$; Testing point $x$; Differential Privacy parameters $\varepsilon, \delta$.
		\State $\textsc{Counts} = \sum_{i=1}^m \fm_i(x) { \color{purple} ~+ ~\textsc{DPNoise}(\varepsilon, \delta)}$  
		\State $y, c_y = \textsc{MostCommon}(\textsc{Counts})$ 
		\Comment{most common predicted class with noisy count}
		\State $y', c_{y'} = \textsc{SecondMostCommon}(\textsc{Counts})$ \Comment{second most common predicted class with count}
		\State $t = \lceil(c_y-c_{y'})/2\rceil - 1$
		\State \Return $y, t$
	\end{algorithmic}
	\caption*{{\bf Algorithm 2} Certified Private Prediction $\textsc{PredGap}~(E,x)$} 
	\label{alg:predgap}
\end{algorithm}

We first analyze the poisoning robustness when privacy of aggregation is not enabled in the following theorem. 
\begin{restatable}{theorem}{RP} \label{thm:robustpredcorrect} Let $E$ be an ensemble of models trained on  datasets $D=\{D_1,\dots,D_m\}$. Assume that on an input $x$, the ensemble generates prediction $y=E(x)$ without $\textsc{DPNoise}$ and Algorithm 2 outputs $(y, t)$. Moreover, assuming an adversary $\Apsc$ who poisons at most $t$ data  owners, the resulting $E'$ trained on poisoned data $D'$ generates the same prediction on $x$ as $E$: $E'(x)=y$.
\end{restatable}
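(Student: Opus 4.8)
The plan is to reduce the statement to a margin (gap) argument on the majority vote, using a single structural fact about SafeNet: because each owner $\usr_k$ trains its own model $\fm_k$ locally and only on its own dataset $\ld_k$, poisoning owner $k$ (replacing $\ld_k$ by its poisoned counterpart) can alter only the single model $\fm_k$ and leaves every other $\fm_j$ with $j \neq k$ untouched. Consequently, an adversary $\Apsc$ poisoning at most $t$ owners can change at most $t$ of the per-model votes $\fm_i(x)$ comprising $\textsc{Counts} = \sum_{i=1}^m \fm_i(x)$, while the remaining $m-t$ votes are identical under $E$ and $E'$. This reduces ``poison $t$ owners'' to ``adversarially reassign at most $t$ one-hot votes.''

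First I would fix the clean tallies: let $c_y$ be the count of the most common class $y = E(x)$ and $c_{y'}$ the count of the second most common class $y'$, as returned by Algorithm~2, and set the gap $g = c_y - c_{y'} \ge 0$, so that the output is $t = \lceil g/2\rceil - 1$. Since $y'$ is the runner-up, every class $z \neq y$ satisfies $c_z \le c_{y'}$. Next I would bound the worst-case effect of the at-most-$t$ changed votes on the poisoned ensemble $E'$. Writing $c''_{(\cdot)}$ for the counts under $E'$, the count of $y$ can drop by at most one per poisoned model, giving $c''_y \ge c_y - t$, and symmetrically the count of any other class $z$ can rise by at most one per poisoned model, giving $c''_z \le c_z + t \le c_{y'} + t$. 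These two bounds hold simultaneously no matter how the adversary splits its budget across classes, so for every $z \neq y$,
\[
c''_y - c''_z \;\ge\; (c_y - t) - (c_{y'} + t) \;=\; g - 2t .
\]
A short parity check closes the argument: if $g$ is even then $2t = g-2$, and if $g$ is odd then $2t = g-1$, so in either case $g - 2t \ge 1 > 0$. Hence $c''_y > c''_z$ strictly for all $z \neq y$, meaning $y$ remains the unique most common class under $E'$ and therefore $E'(x) = y$.

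The main obstacle is making the margin bound tight and parity-robust rather than any single calculation: the value $t = \lceil g/2\rceil - 1$ is chosen precisely so that $g - 2t \ge 1$ in both parity cases, and one must check that the simultaneous worst case is exactly the pair of bounds $c''_y \ge c_y - t$ and $c''_z \le c_{y'} + t$ over all adversarial vote reassignments within budget $t$ (intuitively, the adversary does best by switching poisoned votes from $y$ directly to a single competing class). A secondary point worth stating explicitly is the structural reduction itself, i.e.\ that poisoning owner $k$'s data perturbs only $\fm_k$; this is what lets us treat the influence of $t$ poisoned owners as $t$ flipped votes. Because the resulting inequality is strict, no tie-breaking convention for $\textsc{MostCommon}$ needs to be invoked, and the degenerate cases ($g=0$ giving a vacuous $t=-1$, and $g=1$ giving $t=0$ with $E'=E$) are handled automatically.
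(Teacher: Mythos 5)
Your proof is correct and takes essentially the same route as the paper's: both are vote-counting margin arguments showing that with $t = \lceil (c_y - c_{y'})/2\rceil - 1$ poisoned owners (hence at most $t$ altered votes), the gap between $y$ and every other class cannot be closed. The only difference is direction of phrasing --- the paper lower-bounds the number of vote flips an adversary needs to change the outcome, while you upper-bound the damage $t$ flips can do and verify the margin stays strictly positive --- contrapositive renderings of the same counting argument, with yours being somewhat more explicit about covering all competitor classes $z \neq y$, the parity cases, and the fact that poisoning one owner perturbs only that owner's model.
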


\begin{proof}
	If an adversary's goal were to cause $y'$ to be predicted on input $x$, their most efficient strategy is to flip $y$ predictions to $y'$. If $y$ were the ensemble prediction, it must have at least $\lfloor\frac{c_y+c_{y'}}{2}\rfloor$ model predictions, and the second most common prediction $y'$ would have at most $\lfloor\frac{c_y+c_{y'}}{2}\rfloor$ model predictions. Corrupting these predictions then requires flipping at least $(c_y-c_{y'})/2$ predictions from $y$ to $y'$. Overall, this requires at least $\lceil(c_y-c_{y'})/2\rceil$ poisoned data owners. Thus, an adversary poisoning  at most $t = \lceil(c_y-c_{y'})/2\rceil-1$ data owners still generates the same prediction $y$ on $x$.
\end{proof}

\smallskip
\myparagraph{Privacy Analysis}
Recent work by  McMahan et al. \cite{User-levelDP} introduced the notion of  \emph{user-level} differential privacy where the presence of a user in the protocol should have imperceptible impact on the final trained model. We show that, given our threat model, SafeNet provides the strong privacy guarantee of user-level differential privacy, which also implies example-level differential privacy. This privacy guarantee can protect against  model extraction and property inference attacks, in addition to membership inference attacks.


\begin{theorem}
When $\textsc{DPNoise}$ function  samples from a Laplace random variable $Lap(2/\varepsilon)$, Algorithm 2 satisfies user-level $\varepsilon$-differential privacy.
\label{thm:dp}
\end{theorem}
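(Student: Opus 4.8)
The plan is to recognize the noisy aggregation step of Algorithm 2 as the standard vector-valued Laplace mechanism and then invoke closure under post-processing. Under user-level differential privacy, two input collections are neighboring when they differ only in the data held by a single owner $\usr_k$; in \safenet this owner contributes exactly one model $\fm_k$ to the ensemble $E$, and that model influences the output solely through its vote $\fm_k(x)$, which is a one-hot indicator over the set of classes. Thus the quantity to which noise is added, the histogram $\sum_{i=1}^m \fm_i(x)$, is a vector-valued query in which each owner touches a single coordinate with a single unit of weight.

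The first substantive step is to bound the $\ell_1$-sensitivity of this histogram query. Replacing owner $\usr_k$'s data can change $\fm_k(x)$ from one class, say $a$, to another class $b \ne a$; this decreases the count at $a$ by one and increases the count at $b$ by one, while leaving all other coordinates unchanged. Hence the $\ell_1$ distance between the histograms produced on any two user-neighboring inputs is at most $2$, so the sensitivity is $\Delta_1 = 2$ (and it is $0$ when $\fm_k(x)$ is unchanged).

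Given this sensitivity, I would apply the Laplace mechanism guarantee: adding independent noise drawn from $Lap(\Delta_1/\varepsilon)$ to each coordinate of a query with $\ell_1$-sensitivity $\Delta_1$ yields an $\varepsilon$-differentially private release. Since $\Delta_1 = 2$, the prescribed noise $Lap(2/\varepsilon)$ is exactly what is required, so the noisy vector \textsc{Counts} is $\varepsilon$-differentially private with respect to the user-level neighboring relation. Finally, the returned pair $(y, t)$ is computed from \textsc{Counts} by the deterministic maps \textsc{MostCommon}, \textsc{SecondMostCommon}, and the arithmetic defining $t$, none of which reads the raw data again; by closure of differential privacy under post-processing, the full output of Algorithm 2 inherits the $\varepsilon$ guarantee. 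Example-level privacy then follows since any example-neighboring pair is also user-neighboring.

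The main obstacle is pinning down the neighboring relation and the corresponding sensitivity: the factor of $2$ in $Lap(2/\varepsilon)$ is precisely the signature of the bounded-replacement (``replace one owner'') relation, under which one vote migrates between two classes and the histogram moves by two in $\ell_1$ rather than by one. Justifying that each owner controls exactly one vote, and that even an arbitrarily corrupted $\fm_k$ can shift at most one unit of mass, is the crux; everything after that is the textbook Laplace-mechanism-plus-post-processing argument.
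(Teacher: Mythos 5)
Your proposal is correct and follows essentially the same route as the paper: the paper's proof is exactly the observation that replacing one owner's model shifts $\textsc{Counts}$ by at most 1 in each of two coordinates, giving $\ell_1$-sensitivity 2, so $Lap(2/\varepsilon)$ noise suffices. Your write-up simply makes explicit the steps the paper leaves implicit (the user-level neighboring relation, the Laplace-mechanism invocation, and closure under post-processing for the final $(y,t)$ output), which is a faithful elaboration rather than a different argument.
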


\begin{proof}
Observe that replacing a local model obtained from a data owner in our framework only changes $\textsc{Counts}$ for two classes by 1 on any given query, so it has an $\ell_1$ sensitivity of 2. As a result, $\text{Lap}(2/\varepsilon)$ suffices to ensure that user-level $\varepsilon$-differential privacy holds.
\end{proof}

The main crux of Theorem~\ref{thm:dp} is that no model can influence $\textsc{Counts}$ too much, an observation also  made by PATE~\cite{PATE} and the CaPC~\cite{choquettechoo2021capc} framework, but they only considered example-level differential privacy, protecting against membership inference attacks, but not stronger attacks that user-level differential privacy prevents. This limitation is inherent in PATE, as the central training set is split to train multiple models. However, our stronger analysis holds for SafeNet in the private collaborative learning setting, as we start with pre-existing partitions of benign and poisoned datasets.
We prove Theorem~\ref{thm:dp} by considering Laplace noise, but various improvements to PATE using different mechanisms such as Gaussian noise and other data-dependent approaches~\cite{PATE, papernot2018scalable},
can also be extended to our framework.

\smallskip
\myparagraph{Combining Robustness and Privacy}
Adding differentially private noise prevents Algorithm 2 from returning the exact difference between the top two class-label counts, making it only possible to offer probabilistic robustness guarantees. That is, the returned $t$ is actually a noisy version of the ``true'' $t^*$, where $t^*$ is used to certify correctness. However, for several choices of the $\text{DPNoise}$ function, the exact distribution of the noise is known, making it easy to provide precise probabilistic guarantees similar to those provided by Theorem~\ref{thm:robustpredcorrect}. 
%
For example, if Gaussian noise with scale parameter $\sigma$ is used to guarantee DP, and PredGap returns $t$, then this prediction observed $t$, then we know that the true $t^*$ is larger than $t-k$ with probability $\Phi(k/\sigma)$, where $\Phi$ denotes the Gaussian CDF.

\subsection{Extensions}
\label{sec:extensions}
In addition to providing various guarantees, we offer a number of extensions to our original SafeNet design. 

\smallskip
\myparagraph{Transfer Learning}
A major disadvantage of SafeNet is its slower inference time compared to a traditional PPML framework, requiring to perform a forward pass on all local models in the ensemble. However, for transfer learning scenario, we propose a way where SafeNet runs almost as fast as the traditional framework.
In transfer learning \cite{KSL19,DCLT19}, a pre-trained model $\fm_B$, which is typically trained on a large public dataset, is used as a ``feature extractor'' to improve training on a given target dataset. 
In our setting, all data owners start with a common pre-trained model, and construct their local models by fine tuning $\fm_B$'s last `$l$' layers using their local data. We can then modify the prediction phase of  SafeNet to reduce its inference time and cost considerably. The crucial observation is that all local models differ only in the weights associated to the last $l$ layers. Consequently, given a prediction query, we run $\fm_B$ upto its last $l$ layers and use its output to compute the $l$ layers of all the local models to obtain predictions for majority voting. The detailed description of the modified SafeNet algorithm is given in Appendix~\ref{apndx:TransferLearning}. Note that, this approach achieves the same robustness and privacy guarantees as described in Section \ref{sec:Analysis}, given that $\fm_B$ was originally not tampered with.

\smallskip
\myparagraph{Integration Testing}
While SafeNet can handle settings with non-iid data distributions among data owners, the local models accuracies might be impacted by extreme non-iid settings (we analyze the sensitivity of SafeNet to data imbalance in Section \ref{sec:ExpExtensions}). 
In such cases, SafeNet \emph{fails fast}, allowing the owners to determine whether or not using SafeNet is the right approach for their setting. This is possible because SafeNet's training phase is very cheap, making it possible to quickly evaluate the ensemble's accuracy on the global validation set. If the accuracy is not good enough, the owners can use a different approach, such as a standard MPC training. SafeNet's strong robustness  guarantees and an efficient training phase makes it an appealing first choice for private collaborative learning. 




\smallskip
\myparagraph{Low Resource Owners}
If a data owner does not have sufficient resources to train a model on their data, they cannot participate in the standard SafeNet protocol. In such situations, computationally restricted owners can defer their training to SafeNet, that can use standard MPC training approaches to train their models. Training these models in MPC increases the computational overhead of our approach, but facilitates broader participation.
We provide the details of this modification in Appendix \ref{apndx:TrainCRO} and also run an experiment in Appendix \ref{sec:apndx_extension} to verify that SafeNet remains efficient, while retaining the same robustness and privacy properties.





\subsection{Comparison to Existing Ensemble Strategies}
\label{sec:otherensembles}
Model ensembles have been considered to address adversarial machine learning vulnerabilities in several prior works. Here, we discuss the differences between our analysis and previous ensembling approaches.

\paragraph{Ensembles on a Centralized Training Set}
Several ensemble strategies seek to train a model on a single, centralized training set. This includes using ensembles to prevent poisoning attacks~\cite{jia2021intrinsic, levine2020deep}, as well as to provide differential privacy guarantees~\cite{PATE} or robustness to privacy attacks~\cite{tang2022mitigating}. Due to centralization, none of these techniques can take advantage of the partitioning of datasets. As a result, protection from poisoning is only capable of handling a small number of poisoning examples, whereas our partitioning allows large fractions of the entire dataset to be corrupted. 
PATE, due to data centralization, can only guarantee privacy for individual samples, whereas in our analysis, the \emph{entire dataset} of a given owner can be changed, providing us with \emph{user-level} privacy.

\paragraph{CaPC~\cite{choquettechoo2021capc}}
Chouquette-Choo et al.~\cite{choquettechoo2021capc} propose CaPC, which extends PATE to the MPC collaborative learning setting. Their analysis gives differential privacy guarantees for individual examples. Our approach extends their analysis to a differential privacy guarantee for the entire local training set and model, to provide protection against attacks such as property inference and model extraction. In addition, our approach also provides poisoning robustness guarantees which they cannot, as they allow information to be shared between local training sets.

\paragraph{Cao et al.~\cite{CJG21}}
Recent work by Cao et al.~\cite{CJG21} gave provable poisoning robustness guarantees for federated learning aggregation. They  proposed an ensembling strategy, where, given $m$ data owners, $t$ of which are malicious, they construct an ensemble of $\binom{m}{k}$ global models, where each model is trained on a dataset collected from a set of $k$ clients. Our poisoning robustness argument in Theorem ~\ref{thm:robustpredcorrect} coincides with theirs at $k=1$, a setting they do not  consider as their approach relies on combining client datasets for federated learning. 
Additionally, $k=1$ makes their approach  vulnerable to data reconstruction attacks \cite{BDSSSP21}, an issue SafeNet does not face as the attack directly violates the underlying security guarantee of the MPC.  We experimentally compare both approaches on a federated learning dataset in Section~\ref{sec:FL} and show that our approach outperforms \cite{CJG21}.



\section{Evaluation}
\label{sec:exp}

\subsection{Experimental Setup}    
We build a functional code on top of the MP-SPDZ library~\cite{Keller20}\footnote{https://github.com/data61/MP-SPDZ}  to assess the impact of data poisoning attacks on the training phase of PPML frameworks. We consider four different MPC settings, all available in the MP-SPDZ library:  i) two-party with one semi-honest corruption (2PC) based on \cite{DSZ15,CramerDESX18}; ii) three-party with one semi-honest corruption (3PC) based on  Araki et al.~\cite{AFLNO16} with optimizations by~\cite{MR18,DEK20};  iii) three-party with one malicious corruption based on Dalskov et al.~\cite{DEK21}; and iv) four-party  with one malicious corruption (4PC), also based on~\cite{DEK21}. Note, that both semi-honest and malicious adversaries possess poisoning capability; their roles change only inside the SOC paradigm.

In all the PPML  frameworks, the data owners secret-share their training datasets to the servers and a single ML model is trained on the combined dataset. Typically, real number arithmetic is emulated by using $32$-bit fixed-point representation of fractional numbers. Each fractional number $x \in \Z{\ell}$ is represented as $\lfloor x \cdot 2^f \rceil$, where $\ell$ and $f$  denote the ring size and precision, respectively. We set  $\ell = 64$ and  $f=16$. Probabilistic truncation proposed by Dalskov et al.~\cite{DEK20, DEK21} is applied  after every multiplication.
%
In the MPC library implementation, the sigmoid function  for computing the output probabilities  is replaced with a three-part approximation~\cite{MohasselZ17,CCPS19,DEK21}. In SafeNet, models are trained locally using the original sigmoid function. We implement softmax function using the  method of Aly et al.~\cite{AS19}.
We perform our experiments over a LAN network  on a   $32$-core server with $192$GB of memory allowing up to $20$ threads to be run in parallel.

\subsection{Metrics}
 We use the following metrics to compare  SafeNet  with existing PPML framework:
 
 \smallskip
\myparagraph{Training Time} is the time taken to privately train a model inside the MPC (protocol \pitrain). As is standard practice~\cite{MohasselZ17,MR18,CCPS19,CRS20, BCPS20, DEK21}, this excludes the time taken by the data owners to secret-share their datasets and models to the servers as it is a one-time setup phase. 


\smallskip
\myparagraph{Communication Complexity} is the amount of data exchanged between the  servers during the privacy-preserving execution of the training phase.

\smallskip
\myparagraph{Test Accuracy} is the percentage of test samples that the ML model correctly predicts.

\smallskip
\myparagraph{Attack Success Rate} is the percentage of  target samples that were misclassified as the label of attacker's choice. 

\smallskip
\myparagraph{Robustness against worst-case adversary} We measure the resilience of SafeNet at a certain corruption level $c$ against a powerful, worst-case adversary. For each test sample, this adversary can select any subset of $c$ owners, arbitrarily modifying the model to change the test sample's classification. This is the same adversary considered in Algorithm 2 and by Theorem~\ref{thm:robustpredcorrect}, any any model which is robust against this attack has a provably certified prediction. We measure the error rate on testing samples for this worst-case adversarial model. 

\subsection{Datasets and Models} We give a  descriptions of the datasets and models used for our experiments  below.


\smallskip
\myparagraph{MNIST} The MNIST  dataset \cite{Dua:2019} is a 10 class classification problem which is used to predict digits between $0$ and $9$. We train a logistic regression model for MNIST.

\smallskip
\myparagraph{Adult} The Adult  dataset \cite{Dua:2019} is for a binary classification problem to predict if a person's annual income is above \$50K. We train a neural network with one hidden layer of size $10$ nodes using ReLU activations. 

\smallskip  
\myparagraph{Fashion} We  train several neural networks on the Fashion-MNIST dataset \cite{xiao2017/online}  with one to three hidden layers. The Fashion dataset is a 10-class classification problem  with $784$ features representing various garments. All hidden layers have $128$ nodes and  ReLU activations, except the output layer using softmax.  

\smallskip
\myparagraph{CIFAR-10} The CIFAR-10 dataset~\cite{krizhevsky2009learning} is a 10 class image dataset. CIFAR-10 is harder than other datasets we consider, so we perform transfer learning from a ResNet-50 model~\cite{he2016deep} pretrained on the ImageNet dataset~\cite{deng2009imagenet}. We fine tune only the last layer, freezing all convolutional layers.

\smallskip
\myparagraph{EMNIST} The EMNIST dataset~\cite{EMNIST} is a benchmark federated learning image dataset, split in a non-iid fashion by the person who drew a given image. We select 100 EMNIST clients in our experiments.

\subsection{Dataset Partitioning and Model Accuracy}
We conduct our experiments by varying the number of data owners.
We  split MNIST and Adult datasets across 20 participating data owners, while we use 10 owners for  Fashion and CIFAR-10 datsets. The EMNIST dataset used for comparison with   prior work on federated learning assumes $100$ participating owners. Each owner selects at random $10\%$ of its local training data as the validation dataset $\cv{j}$. All  models are trained using mini-batch stochastic gradient descent.
 
 To introduce non-iid behavior in our datasets (except for EMNIST, which is naturally non-iid), we sample class labels from a Dirichlet distribution~\cite{HSU19}.  
That is, to generate a population of non-identical owners, we sample $q \sim Dir(\alpha p) $  from a Dirichlet distribution, where $p$ characterizes a prior class distribution over all distinct classes, and $\alpha > 0$ is a concentration parameter which controls the degree of similarity between owners.
As $\alpha \rightarrow \infty $, all owners have identical distributions, whereas as $\alpha \rightarrow 0 $, each owner holds samples of only one randomly chosen class. In practice, we observe $\alpha = 1000$ leads to almost iid behavior, while $\alpha = 0.1$ results in an extreme imbalance distribution. 
The default choice for all our experiments is  $\alpha = 10$, which provides a realistic non-iid distribution. We will vary parameter $\alpha$ in Appendix~\ref{sec:apndx_extension}.

{\begin{table}[h!]
		\centering 
		\begin{adjustbox}{max width=0.5\textwidth}{  
				\begin{tabular}{c  c  c  c	 c}
					
					
					Dataset  &  Partition Type & Local Model  &   SafeNet Ensemble & Improvement\\
					
					\midrule
					
					MNIST & \multirow{4}{*}{Dirchlet} & 80.05\% & 89.48\% & 9.03\%\\
					
					Adult &  & 77.32\% & 81.41\% & 4.09\%\\
					
					FASHION & & 71.68\% & 83.26\% & 11.53\%\\
					
					CIFAR-10 &  & 54.03\% & 62.76\% & 8.73\% \\
					
					\midrule
					EMNIST & Natural & 54.05\% & 79.19\% & 25.14\% \\
					

				\end{tabular}
			}
		\end{adjustbox}
		\caption{\small  Test accuracy comparison of a single local model and the entire SafeNet ensemble. SafeNet Ensemble improves upon a single local model across all datasets.}
		\label{tab:LocalAcc}
	\end{table}
	
}

We  measure the accuracy of a local model trained by individual data owners and our SafeNet ensemble. Table \ref{tab:LocalAcc} provides the detailed comparison of the accuracy of the local and ensemble models across all four datasets. We observe that SafeNet consistently outperforms local models, with improvements ranging from 4.09\% to 25.14\%. The lowest performance is on CIFAR-10, but in this case SafeNet's accuracy is very close to fine-tuning the network using the combined dataset, which reaches 65\% accuracy.


\subsection{Implementation of Poisoning Attacks}

\smallskip
\myparagraph{Backdoor Attacks} 
\label{sec:log_bd}
We use the  BadNets attack by Gu et al. \cite{GLDG19}, in which the poisoned owners inject a backdoor into the model to change the model's  prediction from source label $y_s$ to target label $y_t$. For instance, in an image dataset, a backdoor might set a few pixels in the corner of the image to white. The BadNets attack strategy simply identifies a set of $k$ target samples $\lbrace x^t_i\rbrace_{i=1}^k$ with true label $y_s$, and creates backdoored samples   with target label $y_t$. We use $k = 100$ samples, which is sufficient to poison all  models.

 

To run backdoor attacks on models trained with standard PPML frameworks, the poisoned owners create the poisoned dataset $\ld^*_j$ by adding  $k$ poisoned samples and secret-sharing them as part of the training dataset to the MPC. The framework then trains the ML model on the combined dataset submitted by both the honest and poisoned owners.


In SafeNet, backdoor attacks are implemented at  the poisoned owners, which add $k$ backdoored samples to their  dataset $\ld_j$   and train their local models $\fm^*_j$ on the combined clean and poisoned data. A model trained only on poisoned data will be easy to filter due to low accuracy, making training on clean samples necessary.  The corrupt owners then secret-share both the model $\fm^*_j$ and validation set $\cv{j}$ selected at random from $\ld_j$ to the MPC.




\smallskip
\myparagraph{Targeted Attacks} 
\label{sec:log_tgt} We select $k$ targeted samples, and change their labels in training to a target label $y_t$ different from the original label. 
The models are trained to simultaneously minimize both the training and the adversarial loss. This strategy has also been used to construct poisoned models by prior work~\cite{koh2018stronger}, and can be viewed as an unrestricted version of the state-of-the-art Witches' Brew targeted attack (which requires clean-label poisoned samples)~\cite{geiping2020witches}. 

The next question to address is which samples to target as part of the attack. We use two strategies to generate $k=100$ target samples, based on an ML model trained by the adversary over the test data. In the first strategy, called TGT-Top, the adversary chooses examples classified correctly with high confidence by a different model. Because these examples are easy to classify, poisoning them should be hard. We also consider an attack  called TGT-Foot, which chooses low confidence examples, which are easier to poison. For both strategies, the adversary replaces its label with the second highest predicted label. We compare these two strategies for target selection.

The difference between targeted and backdoor attacks is that  targeted attacks do not require the addition of a backdoor trigger  to  training or testing samples, as needed in a backdoor attack. However, the impact of the backdoor attack is larger. Targeted attacks change the prediction on a small set of testing samples (which are selected in advance before training the model), while the backdoor attack generalizes to any testing samples including the backdoor pattern.

\subsection{Evaluation on Logistic Regression} \label{sec:LogReg}



We start with  DIGIT 1/7 dataset, a subset of MNIST data using only digits 1 and 7, for which we evaluate the computational costs and the poisoning attack success, for both traditional PPML and our newly proposed SafeNet framework. 

We perform our experiments over four underlying MPC frameworks, with both semi-honest and malicious adversaries. Table \ref{tab:TC_BD} provides a detailed analysis of the training time and communication complexity for both existing PPML and SafeNet frameworks. Note that the  training time and communication cost for the PPML frameworks is reported per epoch times the number of epochs in training. The number of epochs is a configurable hyper-parameter, but usually at least 10 epochs are required.  On the other hand, the training time and communication reported for SafeNet  is for the end-to-end execution inside the MPC, independent of the number of epochs. We observe large improvements of SafeNet over the existing PPML frameworks. For instance, in the semi-honest two-party setting, SafeNet achieves $30\times$ and $17\times$ improvement in running time and communication complexity, respectively, for $n=10$ epochs. This is expected because SafeNet performs local model training, which is an expensive phase in the MPC.




{\begin{table}[h!]
		\begin{adjustbox}{max width=0.45\textwidth}{  
				\begin{tabular}{c  c  c	 r  r}
					
					
					MPC  &  Setting  &   Framework & Training (s) & Comm. (GB)\\
					
					\midrule
					
					\multirow{2}{*}{2PC} & Semi-Honest & PPML & n$\times$151.84 & n$\times$65.64\\
					
					&\cite{DSZ15}	& SafeNet & $57.41$ & $38.03$\\
					
					\midrule
					
					\multirow{4}{*}{3PC} & Semi-Honest & PPML & n$\times$2.63 & n$\times$0.35 \\
					
					& \cite{AFLNO16}	& SafeNet & $0.54$ & $0.15$\\
					
					\cmidrule(lr){2-5}
					
					& Malicious & PPML & n$\times$32.54 & n$\times$ 2.32 \\
					
					&\cite{DEK21}	& SafeNet & $9.44$ & $1.47$\\		
					
					\midrule
					
					\multirow{2}{*}{4PC} & Malicious  & PPML & n$\times$5.28 & n$\times$0.66\\
					
					& \cite{DEK21} & SafeNet & $1.09$ & $0.28$\\
					
					
				\end{tabular}
			}
		\end{adjustbox}
		\caption{\small  Training Time (in seconds) and Communication (in GB) of existing PPML and SafeNet framework for a logistic regression model over several MPC settings over a LAN network. n denotes the number of epochs required for training the logistic regression model in the PPML framework. The time and communication reported for SafeNet is for end-to-end execution.}
		\label{tab:TC_BD}
		
		
	\end{table}
	
}


To mount the backdoor attack, the backdoor pattern sets the top left pixel value to white (a value of 1). We set the original class as $y_s=1$ and target class as $y_t=7$. Figure~\ref{fig:Logreg_ASR} (a) shows the success rate for the 3PC PPML  and SafeNet frameworks by varying the number of poisoned owners between 0 and 10.   We  tested with all four PPML settings and the results are similar. We observe that by poisoning data of a single owner, the adversary is successfully able to introduce a backdoor in the PPML framework. The model in the PPML framework predicts {all} $k=100$ target samples as $y_t$, achieving $100\%$ adversarial success rate. In contrast,  SafeNet is successfully able to defend against the backdoor attack, and provides $0\%$ attack success rate up to 9 owners with poisoned data. The test accuracy on clean data for both frameworks is high at around $98.98\%$ even after increasing the poisoned owners to $10$. 

 
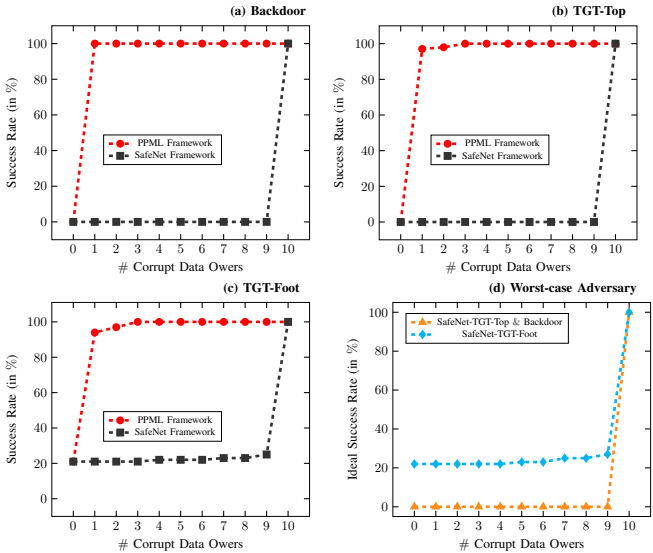
\begin{figure}[htb!]
	\begin{minipage}{.24\textwidth}
		\scalebox{.5}{
			
			\begin{tikzpicture}
				\begin{axis}[legend style={at={(0.2,0.35)},anchor=south west, nodes={scale=0.75, transform shape}}, xlabel={$\#$ Corrupt Data Owers}, ylabel={Success Rate (in $\%$)}, xtick = data]

					\addplot[color = red!805,mark=*,dashed, every mark/.append style={solid}, line width = 2] coordinates { (0,0) (1, 100) (2, 100) (3, 100) (4,100) (5, 100) (6,100) (7,100) (8,100) (9,100) (10,100)};
					\addlegendentry{PPML Framework}

					\addplot[color = black!80,mark=square*,dashed, every mark/.append style={solid}, line width = 2] plot coordinates {(0,0) (1, 0) (2, 0) (3,0) (4,0) (5,0) (6,0) (7,0) (8,0) (9,0) (10,100)};
					\addlegendentry{SafeNet Framework}
					

				\end{axis}
				\node[align=right,font=\bfseries, xshift=8.5em, yshift=1.0em] (title) at (current bounding box.north) {(a) Backdoor};
			\end{tikzpicture}
		}
	\end{minipage}%
	\begin{minipage}{.24\textwidth}
		\scalebox{.5}{
			\begin{tikzpicture}
				\begin{axis}[legend style={at={(0.2,0.35)},anchor=south west, nodes={scale=0.75, transform shape}}, xlabel={$\#$ Corrupt Data Owers}, ylabel={Success Rate (in $\%$)}, xtick = data, ymin = -10, ymax = 110]

					\addplot[color = red!805,mark=*,dashed, every mark/.append style={solid}, line width = 2] coordinates { (0,0) (1,97.04) (2,98) (3,100) (4,100) (5,100) (6,100) (7,100) (8,100) (9,100) (10,100)};
					\addlegendentry{PPML Framework}

					\addplot[color = black!80,mark=square*,dashed, every mark/.append style={solid}, line width = 2] plot coordinates {(0,0) (1,0) (2,0) (3,0) (4,0) (5,0) (6,0) (7,0) (8,0) (9,0) (10,100)};
					\addlegendentry{SafeNet Framework}
					

				\end{axis}
				\node[align=right,font=\bfseries, xshift=8em, yshift=1.0em] (title) at (current bounding box.north) {(b) TGT-Top};
			\end{tikzpicture}
		}
	\end{minipage}%

	\begin{minipage}{.25\textwidth}
		\scalebox{.5}{
			\begin{tikzpicture}
				\begin{axis}[legend style={at={(0.2,0.35)},anchor=south west, nodes={scale=0.75, transform shape}}, xlabel={$\#$ Corrupt Data Owers}, ylabel={Success Rate (in $\%$)}, xtick = data, ymin = -10]

					\addplot[color = red!805,mark=*,dashed, every mark/.append style={solid}, line width = 2] coordinates { (0,21) (1,94.00) (2,97.00) (3,100) (4,100) (5,100) (6,100) (7,100) (8,100) (9,100) (10,100)};
					\addlegendentry{PPML Framework}

					\addplot[color = black!80,mark=square*,dashed, every mark/.append style={solid}, line width = 2] plot coordinates {(0,21) (1,21) (2,21) (3,21) (4,22) (5,22) (6,22) (7,23) (8,23) (9,25) (10,100)};
					\addlegendentry{SafeNet Framework}
					

				\end{axis}
				\node[align=right,font=\bfseries, xshift=8em, yshift=1.0em] (title) at (current bounding box.north) {(c) TGT-Foot};
			\end{tikzpicture}
		}
	\end{minipage}%
	\begin{minipage}{.25\textwidth}
		\scalebox{0.5}{
			\begin{tikzpicture}
				\begin{axis}[legend style={at={(0.05,0.95)},anchor=north west , nodes={scale=0.75, transform shape}}, xlabel={$\#$  Corrupt Data Owers}, ylabel={ Ideal Success Rate (in $\%$)}, ymin = -5, ymax = 105, xtick = data]
					
					\addplot[color = orange!90,mark=triangle*,dashed,every mark/.append style={solid}, line width = 2] coordinates { (0,0) (1,0) (2, 0) (3,0) (4,0) (5,0) (6,0) (7,0) (8,0)(9,0)(10,100)};
					\addlegendentry{SafeNet-TGT-Top $\&$ Backdoor}

					\addplot[color = cyan!90,mark=diamond*,dashed,every mark/.append style={solid}, line width = 2] plot coordinates {(0,22) (1,22) (2,22) (3,22) (4,22) (5,23) (6,23) (7,25) (8,25) (9,27) (10,100)};
					\addlegendentry{SafeNet-TGT-Foot}
					
					


				\end{axis}
				\node[align=right,font=\bfseries, xshift=5.0em, yshift=1.0em] (title) at (current bounding box.north) {(d) Worst-case Adversary};
			\end{tikzpicture}
		}
	\end{minipage}%
	\caption{\small  Logistic regression attack success rate on the Digit-1/7 dataset for PPML and SafeNet frameworks in the 3PC setting, for varying poisoned owners launching Backdoor and Targeted attacks. Plot (a) gives the success rate for the BadNets attack, while  plots (b) and (c) show the success rates for  the TGT-Top and TGT-Foot targeted attacks. Plot (d) provides the worst-case adversarial success when the set of poisoned owners can change per sample. Lower attack success result in increased robustness. SafeNet achieves much higher level of robustness than existing PPML under both attacks.} \label{fig:Logreg_ASR}
	\vspace{-2mm}
\end{figure}

We observe in Figure~\ref{fig:Logreg_ASR} (b) that for the TGT-Top targeted attack, a single owner poisoning is able to successfully misclassify $98\%$ of the target samples in the PPML framework. As a consequence, the test accuracy of the model drops by $\approx 10\%$. In contrast,  SafeNet  works as intended even at high levels of poisoning. For the TGT-Foot attack in Figure~\ref{fig:Logreg_ASR} (c), the test accuracy of the 3PC PPML framework drops by $\approx 5 \%$. The attack success rate is $94\%$ for the 3PC PPML, which is  decreased to $21\%$ by SafeNet, in presence of a single poisoned owner. The accuracy drop and success rate vary  across the two strategies because of the choice of the target samples. In TGT-Foot, the models have low confidence on the target samples, which introduces errors even without poisoning, making the attack succeed with slightly higher rate in SafeNet. Still, SafeNet provides resilience against both TGT-Top and TGT-Foot for up to 9 out of 20 poisoned owners. 



\smallskip
\myparagraph{Worst-case Robustness}
Figure~\ref{fig:Logreg_ASR} (d) shows the worst-case attack success in SafeNet, by varying the number of poisoned owners $c \in [1,10]$ and allowing the attacker to poison a different set of $c$ owners for each testing sample (i.e., the adversarial model considered in Algorithm 2 for which we can certify predictions). Interestingly,  SafeNet's accuracy is similar to that achieved under our backdoor and targeted attacks, even for this worst-case adversarial scenario. Based on these results we conclude that: (1) the backdoor and targeted attacks we choose to implement are as strong as the worst-case adversarial attack, in which the set of poisoned owners is selected per sample; (2) SafeNet provides certified robustness up to 9 out of 20 poisoned owners even under this powerful threat scenario. 

\smallskip
\myparagraph{Multiclass Classification} We  also test both frameworks in the multiclass classification setting for both Backdoor and Targeted attacks on MNIST dataset and  observe similar large improvements. For instance, in the semi-honest 3PC setting, we get $240\times$ and $268\times$ improvement, respectively, in training running time and communication complexity for $n=10$ epochs while the success rate in the worst-case adversarial scenario not exceeding $50\%$ with $9$ out of $20$ owners being poisoned. This  experiment shows that the robust accuracy property of our framework translates seamlessly even for the case of a multi-class classification problem.  The details of the experiment  are deferred to Appendix  \ref{app:bench}.

\subsection{Evaluation on Deep Learning Models} \label{sec:DNN}

We evaluate  neural network training for  PPML and SafeNet frameworks on the Adult and Fashion datasets. We provide experiments on a three hidden layer neural network on Fashion in this section and include additional experiments in Appendix~\ref{app:bench}.

 {\begin{table*}[h!]
		\centering \scriptsize
		\begin{adjustbox}{max width=\textwidth}{  
				\begin{tabular}{c   c	 c  r  r  r  r  r  r r}
					
					
					\multirow{2}{*}{MPC } & \multirow{2}{*}{ Setting} &  \multirow{2}{*}{Framework} &  \multicolumn{1}{c}{\multirow{2}{*}{Training Time (s)}} & \multicolumn{1}{c}{\multirow{2}{*}{Communication (GB)}} & \multicolumn{2}{c}{Backdoor Attack} & \multicolumn{3}{c}{Targeted Attack}\\ \cmidrule{6-7} \cmidrule{7-10}
					
					&     &  &  &  & Test Accuracy & Success Rate  & Test Accuracy & Success Rate-Top & Success Rate-Foot\\
					
					\midrule
					
					
					\multirow{2}{*}{3PC \cite{AFLNO16}} & \multirow{2}{*}{Semi-Honest}  & PPML & n $\times~565.45$ & n $\times~154.79$ & $84.07\%$ & $100\%$ & $82.27\%$ & $100\%$ & $100\%$\\
					
					&  & SafeNet & $156.53$ & $41.39$ & $84.36\%$ & $0\%$ & $84.48\%$ & $0\%$ & $32\%$\\
					
					
					
					
					
					
					
					\midrule
					
					\multirow{2}{*}{4PC \cite{DEK21}} & \multirow{2}{*}{Malicious} & PPML & n $\times~1392.46$ & n $\times~280.32$ & $84.12\%$ & $100\%$ & $82.34\%$ & $100\%$ & $100\%$\\
					
					&  & SafeNet & $356.26$ & $76.43$ & $84.36\%$ & $0\%$ & $84.54\%$ & $0\%$ & $32\%$\\
					
					
					
					
					
					

				\end{tabular}
			}
		\end{adjustbox}
        \caption{\small  Time (in seconds) and Communication (in Giga-Bytes) over a LAN network for PPML and SafeNet framework  training a Neural Network model with 3 hidden layers over Fashion dataset. n denotes the number of epochs used to train the NN model in the PPML framework. The time and communication reported for SafeNet is for end-to-end execution. Test Accuracy and Success Rate is given for the case when a single owner is corrupt.}
		\label{tab:Fashion_nn}
		\vspace{-2mm}
		
	\end{table*}
	
}

Table~\ref{tab:Fashion_nn} provides a detailed analysis of the training time, communication, test accuracy and success rate for the 4PC PPML  framework and SafeNet using one poisoned owner. 
We observe that SafeNet has $39\times$ and $36\times$ improvement in training time and communication complexity over the PPML framework, for $n=10$ epochs. The SafeNet prediction time is on average $26$ milliseconds to perform a single secure prediction, while the existing PPML framework takes on average $3.5$ milliseconds for the same task. We believe this is a reasonable cost for many applications, as SafeNet  has significant training time improvements and robustness guarantees. 

For the BadNets backdoor attack we set the true label $y_s$ as a `T-Shirt' and target label $y_t$ as `Trouser'.  We test the effect of both TGT-Top and TGT-Foot attacks  under multiple poisoned owners, and also evaluate another variant of targeted attack called TGT-Random, where we randomly sample $k=100$ target samples from the test data. Figure~\ref{fig:RA_FASHION} provides the worst-case adversarial success of SafeNet  against these attacks. We observe that SafeNet provides certified robustness for TGT-Random and TGT-Top up to 4 out of 10 poisoned onwers, while the adversary is able to misclassify more target samples in the TGT-Foot attack. The reason is that the $k$ selected target samples have lowest confidence and models in the ensemble are likely to be in disagreement on their prediction.   


\begin{figure}[htb!]
	\centering
	\begin{minipage}{.3\textwidth}
		\scalebox{0.6}{
			\begin{tikzpicture}
				\begin{axis}[legend style={at={(0.05,0.95)},anchor=north west , nodes={scale=0.7, transform shape}}, xlabel={$\#$ Corrupt Data Owers}, ylabel={Ideal Success Rate (in $\%$)}, ymin = -5, ymax = 105, xtick = data,height=6.5cm,width=8.5cm]
					
					\addplot[color = red,mark=*,dashed,every mark/.append style={solid}, line width = 2] coordinates { (0,0) (1,0) (2, 0) (3,0) (4,0) (5,100)};
					\addlegendentry{SafeNet-TGT-Top}
					
					\addplot[color = black!80,mark=triangle*,dashed, every mark/.append style={solid}, line width = 2] plot coordinates {(0,6) (1,8) (2, 12) (3,19) (4,31) (5,100)};
					\addlegendentry{SafeNet-TGT-Random}
					
					\addplot[color = orange!90,mark=diamond*,dashed,every mark/.append style={solid}, line width = 2] plot coordinates {(0,33) (1,36) (2,46) (3,71) (4,91) (5,100)};
					\addlegendentry{SafeNet-TGT-Foot}
					
					\addplot[color = cyan!90, mark=square*,dashed, every mark/.append style={solid}, line width = 2] plot coordinates {(0,0) (1,0) (2,1) (3,7) (4,36) (5,100)};
					\addlegendentry{SafeNet-Backdoor}


				\end{axis}
			\end{tikzpicture}
		}
	\end{minipage}%
	\caption{\small Worst-case adversarial success against targeted and backdoor attacks of a three-layer neural network trained on Fashion  in  SafeNet. The adversary can change the set of $c$ poisoned owners per sample. SafeNet achieves robustness on the backdoor, TGT-Top and TGT-Random attacks, up to 4 poisoned owners out of 10. The TGT-Foot attack targeting low-confidence samples has higher success.} \label{fig:RA_FASHION}
	\vspace{-2mm}
\end{figure}
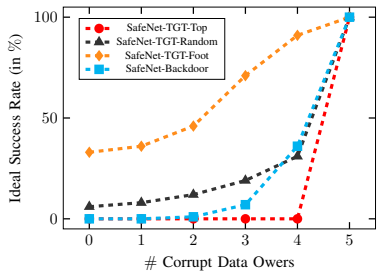

 \label{sec:fashion_nn}

%

\subsection{Evaluation of Extensions} \label{sec:ExpExtensions}

Here, we evaluate  our SafeNet extensions introduced in Section~\ref{sec:extensions}. First, we experiment with our transfer learning extension. We show that, on applying our extension  to SafeNet,  its inference overhead falls dramatically. We test our approach on Fashion and CIFAR-10 datasets.
%
%
For the Fashion dataset, we use the same setup as earlier with $m = 10$ data owners, and three-layered neural network as the model architecture, where each data owner fine-tunes only the last layer ($l=1$) of the pre-trained model. We observe that for  each  secure inference, SafeNet   is now only $1.62\times$ slower and communicates $1.26\times$ more  on average than the PPML framework, while the standard SafeNet approach is about $8\times$ slower due to the evaluation of multiple ML models. 

We observe even better improvements for CIFAR-10 dataset. Here, we use a state-of-the-art 3PC inference protocol from \cite{KRCGRS20}, built specially for ResNet models. In our setting, each owner fine-tunes the last layer of a ResNet-50 model, which was pre-trained on ImageNet data. SafeNet reaches 62.8\% accuracy, decaying smoothly in the presence of poisoning: 51.9\% accuracy tolerating a single poisoned owner, and 39.8\% while tolerating two poisoned owners.
The cost of inference for a single model is an average of 59.9s, and SafeNet's overhead is negligible (experimental noise has a larger impact than SafeNet); SafeNet increases communication by only 0.1\%, increasing around 7MB over the 6.5GB required for standard inference. 

Next, we analyze the behavior of SafeNet under different non-iid settings by varying the concentration parameter $\alpha$. We use the same Fashion dataset setup from Section~\ref{sec:DNN}. We observe that as $\alpha$ decreases, i.e., the underlying data distribution of the owners become more non-iid, SafeNet's accuracy decreases, as expected, but SafeNet still achieves reasonable robustness even under high data imbalance (e.g., $\alpha = 1$). In extremely imbalanced settings, such as $\alpha=0.1$, SafeNet can identify low accuracy during training and data owners can take actions accordingly. We defer the details for this  extension  to Appendix \ref{sec:apndx_extension}, which also includes analyzing attack success rates under extreme non-iid conditions.








\section{Discussion and Comparison} 
\label{sec:LimDist}

We showed that SafeNet successfully mitigates a variety of data poisoning attacks. We now discuss other aspects of our framework such as  scalability and modularity, parameter selection in practice and comparison against other mitigation strategies and federated learning approaches.

\subsection{SafeNet's Scalability and Modularity}
\smallskip
\myparagraph{Scalability} The training and prediction times of  SafeNet  inside the MPC depend on the number of models in the ensemble and the size of the validation dataset. 
The training time increases linearly with  the fraction of training data used for validation and the number of models in the ensemble. Similarly, the prediction phase of SafeNet has both runtime and communication scaling linearly with the number of models in the ensemble. However, we discussed how transfer learning can reduce the inference time of SafeNet.


\smallskip
\myparagraph{Modularity} Another key advantage of SafeNet is that it can use any MPC protocol as a backend, as long as it implements standard  ML operations. We demonstrated this by performing experiments with both malicious and semi-honest security for four different MPC settings. As a consequence, advances in ML inference with MPC will improve SafeNet's runtime. SafeNet can also use any model type implementable in MPC; if more accurate models are designed, this will lead to improved robustness and accuracy. 

\subsection{Instantiating SafeNet in Practice} \label{subsec:Params}
In this section we discuss how SafeNet can be instantiated in practice. There are two aspects the data owners need to agree upon before instantiating SafeNet: i) The MPC framework used for secure training and prediction phase and ii) the parameters in Theorem \ref{thm:LB} to achieve  poisoning robustness. The MPC framework is agreed upon  by choosing the total number of outsourced servers $N$ participating in the MPC, the number of corrupted servers $T$ and the nature of the adversary (semi-honest or malicious in the SOC paradigm). The owners then agree upon a filtering threshold $\threshold$ and the number of poisoned owners $t$ that can be tolerated. Once these parameters are chosen the maximum allowed error probability of the local models trained by the  honest owners based on Lemma \ref{lem:VS} and Theorem \ref{thm:LB}, can be computed as $p < \min( \frac{m(1- \threshold)-t}{m-t}, \frac{m-2t}{2(m-t)})$, where $m$ denotes the total number of data owners. Given the upper bound on the error probability $p$, each honest owner trains its local model while satisfying the above constraint.

We provide a concrete example on parameter selection as follows: We instantiate our Fashion dataset setup, with $m = 10$ data owners participating in SafeNet. For  the MPC framework we choose a three-party setting ($N = 3$ servers), tolerating $T = 1$ corruption.  For poisoning robustness, we set $\threshold = 0.3$ and the number of poisoned owners to $t = 2$.  This gives us the upper bound on max error probability  as $ p < 0.375$. Also the size of the global validation dataset is $|\CV| > 92$ samples, i.e., each data owner contributes $10$ cross-validation samples each such that the constrained is satisfied. With this instantiation, we observe that none of the clean models are filtered during training and the attack success rate of the adversary for  backdoor attacks remains the same even after poisoning $3$ owners, while our analysis holds for $t=2$ poisoned owners. Thus, in practice  SafeNet is able  tolerate more poisoning than our analysis suggests.

\subsection{Comparing to poisoning defenses} \label{sec: otherDefenses}
Defending against poisoning attacks is an active area of research, but defenses tend to be heuristic and specific to attacks or domains. Many defenses for backdoor poisoning attacks exist \cite{LDG18, TLM18,  CCBLELMS19, WYSLVZZ19}, but these strategies work only for Convolutional Neural Networks trained on image datasets; Severi et al.~\cite{SMCO21} showed that these approaches fail when tested on other data modalities and models. Furthermore, recent work by Goldwasser et.al \cite{GKVZ22} formulated a way to plant backdoors that are undetectable by any defense.  In contrast, SafeNet is model agnostic and works for a variety of data modalities. Even if an attack is undetectable, the adversary can poison only a subset of models, making the ensemble robust against poisoning. 
%
%
In certain instances SafeNet can tolerate around $30\%$ of the training data being poisoned, while being attack agnostic. SafeNet is also robust to stronger model poisoning attacks~\cite{BVES18, BCMC19, FCJG20}, which are possible when data owners train their models locally. SafeNet tolerates model poisoning because each model only contributes to a single vote towards the final ensemble prediction. In fact, all our empirical and theoretical analysis of SafeNet is computed for arbitrarily corrupted models. 


\subsection{Comparison with Federated Learning} \label{sec:FL}
Federated Learning (FL) is  a distributed machine learning framework that allows clients to train a  global model without sharing their local training datasets to the central server. However, it differs from the PPML setting we consider in the following ways: (1) Clients do not share their local data to the server in FL, whereas PPML allows sharing of datasets; (2) Clients participate in multiple rounds of training in FL, whereas they communicate only once with the servers in PPML; (3) Clients receive the global model at each round in FL, while in SafeNet they secret-share their models once at the start of the protocol; and, finally, (4) PPML provides stronger confidentiality guarantees such as privacy of the global model. 


It is possible to combine FL and MPC to guarantee both client and global model privacy \cite{KLSYYGCWAW20,HGSN20, FZXGWZ20}, but this involves large communication overhead and is susceptible to poisoning \cite{LGYFLZ22}. For example, recent work \cite{XKG19,BCMC19,BVHES20}  showed that malicious data owners can significantly reduce the learned global model's accuracy. Existing defenses against such owners use Byzantine-robust aggregation rules such as trimmed mean \cite{YCRB19}, coordinate-wise mean \cite{YCRB18} and  Krum \cite{BMGS17}, which have been show to be susceptible to backdoor and model poisoning attacks \cite{FCJG20}. Recent work in FL such as FLTrust \cite{CFLG21} and DeepSight \cite{RNMS22} provide mitigation against backdoor attacks. Both strategies are inherently heuristic, while SafeNet offers provable robustness guarantees. FLTrust also requires access to a clean  dataset, which is not required in our framework, and DeepSight inspects each model update before aggregation, which is both difficult in MPC and leads to privacy leakage from the updates, a drawback not found in SafeNet. An important privacy challenge is that federated learning approaches permit data reconstruction attacks when the central server is malicious \cite{BDSSSP21}. SafeNet prevents such an attack, as it directly violates the security guarantee of the MPC, when instantiated for the malicious setting.


We experimentally compare SafeNet to the federated learning-based approach of Cao et al.~\cite{CJG21}, who also gave provable robustness guarantees in the federated averaging scenario. We instantiate  their strategy for EMNIST dataset and compare their Certified Accuracy metric to SafeNet's, with $m = 100$ data owners, $k = \{2,4\}$ and FedAvg as the base algorithm. To ensure both approaches have similar inference times, we fix the ensemble size to 100 models, each trained using federated learning with 50 global and local iterations.

\begin{figure}[h!]
    \centering
    
    \includegraphics[width=0.3\textwidth]{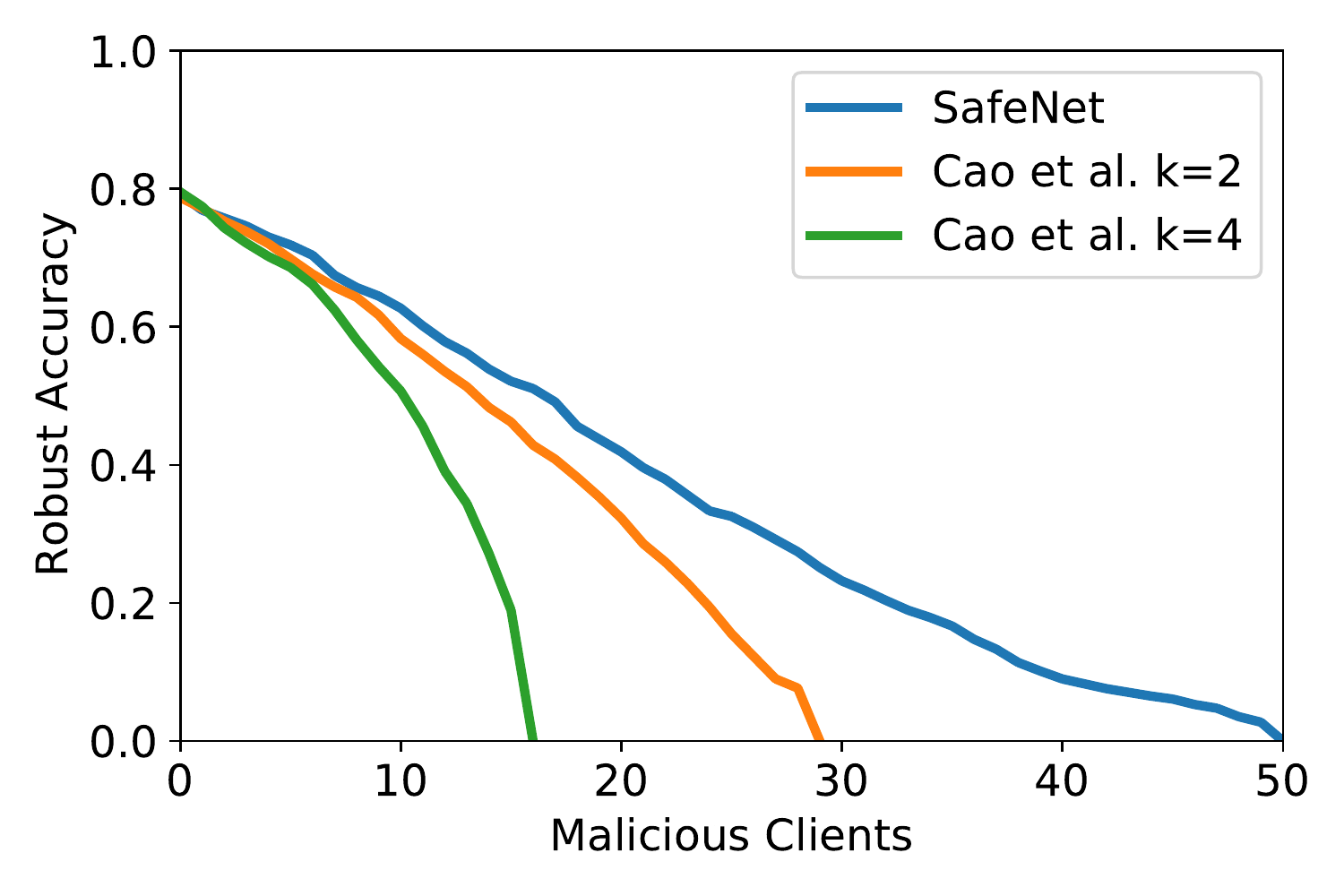}
    \caption{Certified Accuracy of our framework compared to Cao et al.~\cite{CJG21}. We fix the size of the Cao et al. ensemble to 100, to match the test runtime of SafeNet.}
    \label{fig:caopcomarison}
    \vspace{-2mm}
\end{figure}

Figure \ref{fig:caopcomarison} shows that SafeNet consistently outperforms \cite{CJG21}, in terms of maintaining a high certified accuracy in the presence of large poisoning rates. Moreover, their strategy is also particularly expensive at training time when instantiated in MPC. During training, their approach requires data owners to interact inside MPC to train models over multiple rounds. By contrast, SafeNet only requires interaction in MPC at the beginning of the training phase, making it significantly faster. 



\section{Conclusion}

In this paper, we extend the security definitions of  MPC to account for data poisoning attacks when training machine learning models privately. We consider a novel adversarial model who can manipulate the training data of a subset of owners and control a subset of servers in the MPC. We then propose SafeNet, which performs ensembling in MPC, and show that our design has provable  robustness and privacy guarantees, beyond those offered by existing approaches. We evaluate SafeNet  using logistic regression and neural networks models trained on five datasets by varying the distribution similarity across data owners. We consider both end-to-end and transfer learning scenarios. We demonstrate experimentally that SafeNet achieves even higher robustness than its theoretical analysis against backdoor and targeted poisoning attacks, at a significant performance improvement in the training time and communication complexity compared to existing PPML frameworks.   


\section{Acknowledgments}
We thank Nicolas Papernot and Peter Rindal for helpful discussions and feedback. This research was sponsored by the U.S. Army Combat Capabilities Development Command Army Research Laboratory under Cooperative Agreement Number W911NF-13-2-0045 (ARL Cyber Security CRA). The views and conclusions contained in this document are those of the authors and should not be interpreted as representing the official policies, either expressed or implied, of the Combat Capabilities Development Command Army Research Laboratory or the U.S. Government. The U.S. Government is authorized to reproduce and distribute reprints for Government purposes notwithstanding any copyright notation here on.




\bibliographystyle{plain}
\bibliography{main_short}

\appendices

\section{SafeNet Analysis} 
\label{sec:SafeNetAnalysisProof}


In this section we first provide a detailed proof on the size of the validation dataset $\CV$ such that all clean models clear the filtering stage of the training phase of our framework. We then provide a proof on achieving lower bounds on the test accuracy of our framework given all clean models are a part of the ensemble.

The main idea of deriving the minimum size of $\CV$ uses the point that the errors made by a clean model on a clean subset of samples in $\CV$ can be viewed as a Binomial distribution in $(m-t)n$ and $p$, where $n$ denotes the size of the validation dataset $\cv{k}$ contributed by an owner $\usr_k$. We can then upper bound the total errors made by a clean model by applying Chernoff bound and consequently compute the size of $\CV$.

\begin{restatable}{lemma}{VS}
	\label{lem:VS} 	 Let  $\Apsc$ be an adversary who poisons $t$ out of $m$ data owners and corrupts $T$ out of $N$ servers,  and thus contributes $t$ poisoned models to  ensemble $E$, given as output by Algorithm 1.
	Assume that  $\pistr$ securely realizes functionality $\FpTrain$ and every clean model in $E$ makes an error on a clean sample  with probability at most  $p < 1- \threshold$,  where $\threshold$ is the filtering threshold. 
	
	If the validation dataset has at least $\frac{(2+\delta)m\log1/\epsilon}{\delta^2(m-t)p}$ samples and $0 \leq t<\frac{m(1-\threshold-p)}{(1-p)}$, then all clean models pass the filtering stage of the training phase with probability at least $1-\epsilon$, where  $\delta = \frac{(1-\threshold)m -t}{(m-t)p}-1$ and $\epsilon$ denotes the failure probability.  
\end{restatable}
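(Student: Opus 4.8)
The plan is to fix a single clean model, bound the probability that it is wrongly removed by the filter, and then union bound over all $m-t$ clean models. Recall that the filtering rule of Algorithm 1 retains $\fm_k$ exactly when its accuracy on $\CV$ exceeds $\threshold$, i.e. when the number of samples in $\CV$ it misclassifies is strictly below $(1-\threshold)\lvert\CV\rvert$. Writing $n = \lvert\cv{k}\rvert$ for the common size of each owner's validation set, we have $\lvert\CV\rvert = mn$, of which $(m-t)n$ samples come from the $m-t$ clean owners and $tn$ from the $t$ poisoned owners controlled by $\Apsc$.

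First I would split the error count of a fixed clean model into its contribution on the clean validation samples and on the poisoned ones. By the independence assumption together with the hypothesis that each clean model errs on a clean sample with probability at most $p$, the number $X$ of errors on the $(m-t)n$ clean samples is dominated by a $\mathrm{Bin}((m-t)n, p)$ variable with mean $\mu = (m-t)np$. For the $tn$ poisoned samples, the adversary may assign labels so that the clean model appears to err on every one of them; hence the worst-case total error count is $X + tn$, and the model is wrongly filtered only if $X + tn \geq (1-\threshold)mn$, i.e. $X \geq (1-\threshold)mn - tn$.

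The key algebraic step is to recognize that this threshold is exactly a $(1+\delta)$ multiple of $\mu$. Using $\delta = \frac{(1-\threshold)m - t}{(m-t)p} - 1$ gives $(1-\threshold)mn - tn = n\bigl((1-\threshold)m - t\bigr) = (1+\delta)(m-t)np = (1+\delta)\mu$. Moreover $\delta > 0$ is equivalent to $(1-\threshold)m - t > (m-t)p$, which rearranges precisely to the hypothesis $t < \frac{m(1-\threshold-p)}{1-p}$; this is what makes the upper-tail Chernoff bound applicable. I would then invoke the multiplicative Chernoff bound $\Prob[X \geq (1+\delta)\mu] \leq \exp\bigl(-\delta^2\mu/(2+\delta)\bigr)$ and substitute the sample-size assumption. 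Since $\lvert\CV\rvert = mn \geq \frac{(2+\delta)m\log(1/\epsilon)}{\delta^2(m-t)p}$ yields $\mu = (m-t)np \geq \frac{(2+\delta)\log(1/\epsilon)}{\delta^2}$, the exponent is at most $-\log(1/\epsilon)$, so a single clean model is filtered with probability at most $\epsilon$, and a union bound over the clean models closes the argument.

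The main obstacle I expect is the treatment of the poisoned validation samples: because $\Apsc$ controls the labels of the $tn$ samples contributed by poisoned owners, one cannot reason probabilistically about them and must instead argue over the worst case that all of them count against a clean model — this is exactly the term that shifts the Binomial tail threshold from $(1-\threshold)mn$ down to $(1-\threshold)mn - tn$ and thereby couples the admissible poisoning level $t$ to $\delta$. The remaining work is bookkeeping: matching the filtering threshold to $(1+\delta)\mu$, checking the sign of $\delta$ against the constraint on $t$, and carrying the (at most linear) union-bound factor over the $m-t$ clean models through the logarithm in the sample-size bound.
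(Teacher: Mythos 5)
Your proposal follows essentially the same route as the paper's proof: treat the $tn$ poisoned validation samples as worst-case errors against each clean model, model its errors on the $(m-t)n$ clean samples as $\mathrm{Bin}((m-t)n,p)$, and apply the multiplicative Chernoff bound with exactly the paper's $\mu=(m-t)np$ and $\delta$, then invert the exponent to obtain the stated sample size and recover the constraint on $t$ from $\delta>0$. The only difference is your final union bound over the $m-t$ clean models — the paper stops after bounding a single clean model's failure probability by $\epsilon$ — and your version is in fact the more careful one, since, as you note, making the simultaneous guarantee hold at level $\epsilon$ costs an extra $\log(m-t)$ term inside the logarithm that the paper's stated bound omits.
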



\begin{proof} \label{proof:VS}
	Assume that each owner contributes equal size validation dataset $\cv{k}$ of $n$ samples, then the combined validation set $\CV$ collected from $m$ data owners is comprised  of  $mn$  i.i.d.  samples. However, given  an adversary $\Apsc$ from our threat model, there can be at most $t$ poisoned owners contributing $tn$ poisoned samples to $\CV$. 
%
We define a Bernoulli random variable as follows:
		\begin{align*}
		X_i = 
		\begin{cases}
			1, & \text{w.p.}~p  \\
			0, & \text{w.p.}~ 1-p
		\end{cases}
	\end{align*}

	where $X_i$ denotes if a clean model makes an error on the $i^{th}$ clean sample in the  validation dataset.  Then there are $ \text{Bin}((m-t)n, p)$ errors made by the clean model on the clean subset of samples in $\CV$. Note that, a model passes the filtering stage only when it makes $\geq \threshold mn$ correct predictions. We assume that the worst case where the clean model makes incorrect predictions on all the $tn$ poisoned samples present in $\CV$. As a result, the clean model must make at most $(1-\threshold)mn-tn$ errors on the clean subset of $\CV$ with probability $1-\epsilon$. We can upper bound the probability the model makes at least $(1-\threshold)mn +1 - tn$ errors with a multiplicative Chernoff bound with $\delta > 0$:\newline
	
	\noindent
	\scalebox{0.8}{ $\Prob[\sum_{i=1}^{(m-t)n} X_i > (1-\threshold) mn-tn ] \\=   \Prob\left[\sum_{i=1}^{n} X_i  > (1 + \delta)\mu\right]   < e^{-\frac{\delta^2\mu}{2+\delta}}$}\newline

	where $\mu = (m-t)np$ (the mean of $Bin(mn-tn, p)$) and $\delta = \frac{(1-\threshold)m-t}{(m-t)p}$. The chernoff bound gives that the probability the clean model makes too many errors is at most $e^{-\frac{\delta^2\mu}{2+\delta}} = \epsilon$. Then it suffices to have this many samples: 
	
	$$|\CV| = mn = \frac{(2+\delta)m\log1/\epsilon}{\delta^2(m-t)p}$$
	
	where $\epsilon$ denotes the failure probability and $t < \frac{m(1-\threshold-p)}{(1-p)}$. The inequality on $t$ comes from requiring $\delta>0$.

\end{proof}

As a visual interpretation of  Lemma~\ref{lem:VS},  Figure~\ref{fig:cv_plots} shows the minimum number of samples required in the global validation dataset for varying  number of poisoned owners $t$ and  error probability $p$. We set the total models $m = 20$, the failure probability $\epsilon = 0.01$ and the filtering threshold $\threshold = 0.3$. The higher the values of $t$ and $p$,  the more samples  are required in the validation set. For instance, for $p = 0.20$ and number of poisoned owners $t = 8$,  all clean models pass the filtering stage with probability at least $0.99$  when the  validation set size has at least $60$ samples.

\begin{figure}[h!]
\centerline{
	\includegraphics[width=0.3\textwidth]{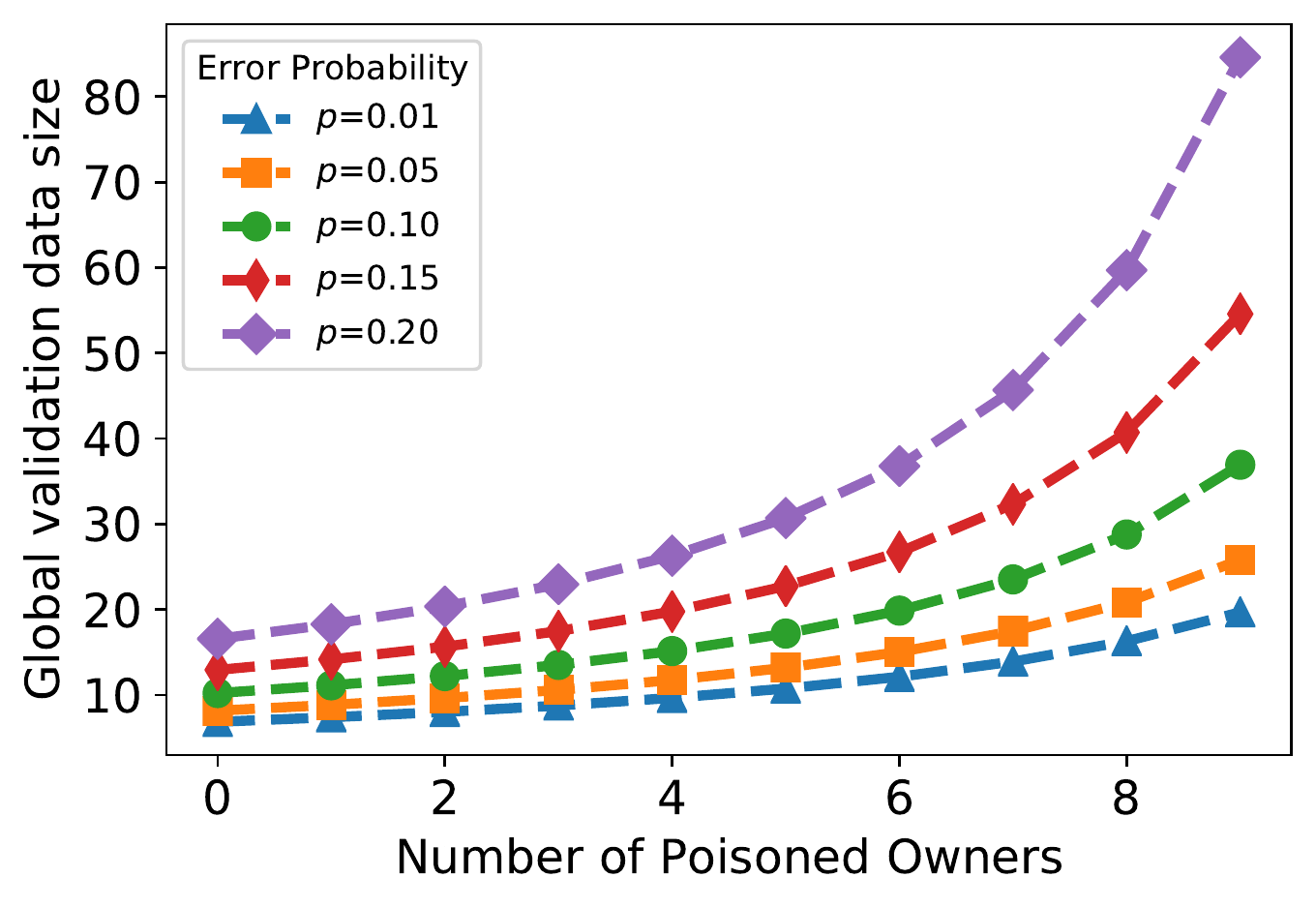}%
}
\caption{ \small Minimum number of samples in the validation dataset as a function of maximum error probability $p$ and number of poisoned owners  $t$ for $m = 20$ data owners. We set the filtering threshold $\threshold=0.03$ and  failure probability $\epsilon=0.01$.}
\label{fig:cv_plots}
\vspace{-2mm}
\end{figure}



 We use a similar strategy as above to compute the lower bound on the test accuracy. On a high level, the proof follows by viewing the combined errors made by the clean models as a Binomial distribution $Bin(m-t, p)$. We can then upper bound the total errors made by all the models in the ensemble by applying Chernoff bounds and consequentially lower bound the ensemble accuracy.

\begin{restatable}{theorem}{LB} \label{thm:LB}
Assume that the conditions in Lemma~\ref{lem:VS} hold against  adversary $\Apsc$ poisoning at most $ t < \frac{m}{2}\frac{1-2p}{1-p}$ owners and that the errors made by the clean models are independent.
	Then $E$  correctly classifies new samples with probability at least $ p_c = (1- \epsilon) \left(1 - e^{-\frac{\delta'^2\mu'}{2+\delta'}}\right)$, where $\mu' = (m-t)p$ and $\delta' = \frac{m-2t}{2\mu'} - 1$.
\end{restatable}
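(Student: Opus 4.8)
The plan is to condition on the success of the filtering stage and then reduce the correctness of majority voting on a fresh test point to a single binomial tail that I can control with a multiplicative Chernoff bound, exactly mirroring the argument of Lemma~\ref{lem:VS}. First I would invoke Lemma~\ref{lem:VS}: under the stated hypotheses every one of the $m-t$ clean models survives filtering and hence appears in $E$ with probability at least $1-\epsilon$. I condition on this event $G$ for the rest of the argument. The final bound is then the product of $\Prob[G]$ and the conditional correctness probability, which is legitimate because the test-time errors of the clean models are assumed independent and are unaffected by the filtering outcome.

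Next I fix a test point with true label $y$ and let $Z$ denote the number of clean models that err on it. On $G$ the correct class receives exactly $(m-t)-Z$ votes from the clean models, while the $t$ poisoned models may vote arbitrarily. The worst case for the ensemble is that all $t$ poisoned votes, together with the $Z$ errant clean votes, pile onto a single rival class $y'$, giving $y'$ at most $Z+t$ votes; any other placement only splits these votes and lets $y$ win more easily, so this also covers the multiclass plurality rule. Hence $E$ outputs $y$ whenever $(m-t)-Z > Z+t$, i.e.\ whenever $Z < \tfrac{m-2t}{2}$, and it suffices to lower bound $\Prob[\,Z < \tfrac{m-2t}{2}\,]$.

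Finally I bound the complementary tail. Since each clean model errs independently with probability at most $p$, the count $Z$ is stochastically dominated by $\mathrm{Bin}(m-t,p)$, whose mean is $\mu' = (m-t)p$. Writing the threshold as $\tfrac{m-2t}{2} = (1+\delta')\mu'$ forces $\delta' = \tfrac{m-2t}{2\mu'}-1$, and the hypothesis $t < \tfrac{m}{2}\tfrac{1-2p}{1-p}$ is precisely the inequality $m-2t > 2(m-t)p$ needed to guarantee $\delta' > 0$. The multiplicative Chernoff bound then applies and yields $\Prob[\,Z \geq (1+\delta')\mu'\,] < e^{-\frac{\delta'^2\mu'}{2+\delta'}}$. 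Combining this with $\Prob[G] \geq 1-\epsilon$ gives the claimed $p_c = (1-\epsilon)\bigl(1 - e^{-\frac{\delta'^2\mu'}{2+\delta'}}\bigr)$.

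The step I expect to require the most care is the worst-case vote-counting reduction: I must argue that concentrating all poisoned and errant votes on a single competing class is genuinely the adversary's optimal move, and that this collapses the multiclass majority-vote condition into the one-sided event $\{Z < (m-2t)/2\}$. I must also verify the sign condition on $\delta'$ so that the Chernoff bound is meaningful rather than vacuous, which is exactly where the constraint on $t$ enters. The remaining work—identifying $\mu'$ and $\delta'$ and substituting into the standard bound—is routine.
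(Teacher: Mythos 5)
Your proposal is correct and follows essentially the same route as the paper's proof: condition on Lemma~\ref{lem:VS} (all clean models survive filtering with probability at least $1-\epsilon$), assume the worst case that all $t$ poisoned models pass filtering and misclassify, reduce correctness to the event that the clean-model error count $Z \sim \mathrm{Bin}(m-t,p)$ satisfies $Z < \frac{m-2t}{2}$, and bound the complementary tail with the multiplicative Chernoff bound using the same $\mu'$ and $\delta'$, with the constraint on $t$ arising exactly from $\delta' > 0$. Your explicit worst-case vote-counting argument for the multiclass plurality rule is a slightly more careful justification of the threshold $\frac{m-2t}{2}$ than the paper's terse ``error if $t + \mathrm{Bin}(m-t,p) > m/2$,'' but it is the same argument in substance.
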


\begin{proof}  \label{proof:LB}
	Lemma~\ref{lem:VS} shows that, with probability $>1-\epsilon$, no clean models will be filtered during ensemble filtering. Given all clean models pass the  filtering stage, we consider the worst case where even the $t$ poisoned models bypass filtering.
	Now, given a new test sample, $m-t$ clean models have uncorrelated errors each with probability at most $p$, the error made by each clean model can be viewed as a Bernoulli random variable with probability $p$ and so the total errors made by clean models follow a binomial $X\sim \text{Bin}(m-t, p)$. We assume that a new sample will be misclassified by all $t$ of the poisoned models. Then the ensemble as a whole makes an error if $t+Bin(m-t, p)>m/2$. We can then bound the probability this occurs by applying Chernoff bound as follows:
\begin{equation*}
\Prob\left[X + t \geq \frac{m}{2}\right] = \Prob\left[X \geq (1 + \delta')\mu'\right] \leq e^{-\frac{\delta'^2\mu'}{2+\delta'}},
\end{equation*}
where $\mu' = (m-t)p$ is the mean of $X$ and $\delta' = \frac{m-2t}{2\mu'}-1 > 0$. Then the probability of making a correct prediction can be lower bounded by:
\begin{equation*}
\Prob\left[X < \frac{m}{2} - t\right] > 1 - e^{-\frac{\delta'^2\mu'}{2+\delta'}},
\end{equation*}
given the number of poisoned models $$t < \frac{m(1-2p)}{2(1-p)}.$$

The inequality on $t$ comes from the constraint $\delta'>0$ for the Chernoff bound to hold. Note that, the above bound holds only when all the clean models pass the filtering stage, which occurs with probability at least $1 - \epsilon$ by Lemma~\ref{lem:VS}. Then the bound on the probability of making a correct prediction by the ensemble can be written as:
\begin{equation*}
	\Prob\left[X < \frac{m}{2} - t\right] > (1- \epsilon) \left(1 - e^{-\frac{\delta'^2\mu'}{2+\delta'}}\right)
\end{equation*}
\end{proof}

\section{Realization in MPC}
\label{sec:mpc_inst}


To instantiate SafeNet in MPC, we first describe the required MPC building blocks, and then provide the SafeNet training and secure prediction protocols.


\smallskip
\subsubsection{MPC Building Blocks} The notation $\share{x}$ denotes a given value $x$ secret-shared among the servers. The exact structure of secret sharing is dependent on the particular instantiation of the underlying MPC framework\cite{DSZ15, AFLNO16, GordonR018, CCPS19, CRS20,BCPS20}. We assume each value  and its  respective secret shares to be elements over an arithmetic ring $\Z{\ell}$. All  multiplication and addition operations are carried out over $\Z{\ell}$.

We express each of our building blocks in the form of an ideal functionality and its corresponding protocol.  An ideal functionality can be viewed as an oracle, which takes input from the parties, applies a predefined function $f$ on the inputs and returns the output back to the parties. The inputs and outputs can be in clear or in $\share{\cdot}$-shared format depending on the definition of the functionality. These ideal functionalities are realized using secure protocols depending on the specific instantiation of the MPC framework agreed upon by the parties. Below are the required building blocks:

\smallskip
\myparagraph{Secure Input Sharing} Ideal Functionality $\FSh$ takes as input  a value $x$ from a party who wants to generate a $\share{\cdot}$-sharing of x, while other parties input $\bot$ to the functionality. $\FSh$ generates a $\share{\cdot}$-sharing of $x$ and sends the appropriate shares to the parties. 
We use $\piSh$ to denote the protocol that realizes this functionality securely.

\smallskip
\myparagraph{Secure Addition} Given $\share{\cdot}$-shares of $x$ and $y$, secure addition is realized by parties locally adding their shares $\share{z} = \share{x} +\share{y}$, where $z = x+y$.


\smallskip
\myparagraph{Secure Multiplication:} Functionality $\FMul$ takes as input $\share{\cdot}$-shares of values $x$ and $y$, creates $\share{\cdot}$-shares of $z = xy$ and sends the  shares of $z$ to the parties. $\piMultA$ denotes the protocol to securely realize $\FMul$.

\smallskip
\myparagraph{Secure Output Reconstruction}
$\FOp$ functionality takes as input  $\share{\cdot}$-shares of a value $x$ from the parties and a commonly agreed upon party id $\textsf{pid}$ in clear. On receiving the shares and $\textsf{pid}$,  $\FOp$ reconstructs $x$ and sends it to the party associated to $\textsf{pid}$. 

\smallskip
\myparagraph{Secure Comparison}
$\FComp$ functionality takes as input a value $a$ in $\share{\cdot}$-shared format. $\FComp$ initializes a bit $b=0$, sets $b = 1$ if $a>0$ and outputs it in $\share{\cdot}$-shared format. Protocol $\piComp$ is used to securely realize $\FComp$.

\smallskip
\myparagraph{Secure Zero-Vector}
$\FZVec$ functionality takes as input a value $L$ in clear from the parties. $\FZVec$  constructs  a vector $\vz$ of all zeros of size $L$ and outputs  $\share{\cdot}$-shares of $\vz$. $\piZVec$ denotes the protocol that securely realizes $\FZVec$.

\smallskip
\myparagraph{Secure Argmax}
$\FAmax$ functionality takes as input a vector $\vx$ in $\share{\cdot}$-shared format and outputs $\share{\cdot}$-shares of a value $\textsc{op}$, where $\textsc{op}$ denotes the index of the max element in vector $\vx$. $\piAMX$ denotes the protocol that securely realizes $\FAmax$.

\smallskip
\subsubsection{ML Building Blocks}


We introduce several building blocks required for private ML training, implemented by
  existing MPC frameworks~\cite{MohasselZ17, MR18, BCPS20, WTBKMR21}:
  
 

\smallskip
\myparagraph{Secure Model Prediction}
$\FmPred$ functionality takes as input a trained model $\fm$ and a feature vector $\vx$ in $\share{\cdot}$-shared format. $\FmPred$ then computes prediction $\textsc{\bf Preds} = \fm(\vx)$ in one-hot vector format and outputs $\share{\cdot}$-shares of the same. $\pimPred$ denotes the protocol which securely realizes functionality $\FmPred$.

\smallskip
\myparagraph{Secure Accuracy}
$\FAcc$ functionality takes as input two equal length vectors
$\vy_{pred}$ and $ \lbl$ in $\share{\cdot}$-shared format. $\FAcc$ then computes the total number matches (element-wise) between the two vectors and outputs $\frac{\#~\text{matches}}{|\lbl|}$ in $\share{\cdot}$-shared format. $\piAcc$ denotes the protocol which securely realizes this functionality.

\smallskip
\subsubsection{Protocols} \label{sec: MLProt}
 We propose two protocols to realize our SafeNet framework in the SOC setting. The first protocol $\pistr$ describes the SafeNet training phase where given $\share{\cdot}$-shares of dataset $\cv{k}$ and model $\fm_k$, with respect to each owner $\usr_k$, $\pistr$  outputs  $\share{\cdot}$-shares of an ensemble $E$ of $m$ models and vector $\valvec$. 
 The second protocol $\piPred$ describes the prediction phase of SafeNet, which given  $\share{\cdot}$-shares of a client's query  predicts its output label. The detailed description for each protocol is as follows:  

\smallskip
\myparagraph{SafeNet Training} \label{sec:sectrain}
 We follow the notation from Algorithm 1.   
  Our goal is for training protocol $\pistr$ given in Figure~\ref{prot:filter} to securely realize functionality $\FpTrain$ (Figure 2), where the inputs to $\FpTrain$ are $\share{\cdot}$-shares of $\ld_k = \cv{k}$ and $a_k = \fm_k$, and the corresponding outputs are $\share{\cdot}$-shares of $O = E$ and $\valvec$. 
  Given the inputs to $\pistr$, the servers first construct a common validation dataset $\share{\CV} = \cup_{k=1}^{m} \share{\cv{k}}$ and an ensemble of models $ \share{E} = \lbrace \share{\fm_k} \rbrace_{k=1}^{m}$.
Then  for each model $\fm_k \in E$,  the servers  compute the validation accuracy $\share{\valacc_k}$. The output $\share{\valacc_k}$ is  compared with a pre-agreed threshold $\threshold$ to obtain a $\share{\cdot}$-sharing of $\valvec_k$, where  $\valvec_k = 1$ if $\valacc_k > \threshold$. After execution of  $\pistr$ protocol,  servers obtain $\share{\cdot}$-shares of ensemble $E$ and vector $\valvec$.

\begin{figure}[h!]
\begin{protocolbox}{$\pistr$}{SafeNet Training Protocol}{prot:filter} 
	\algoHead{Input:} $\share{\cdot}$-shares of  each owner $\usr_k$'s  validation dataset $\cv{k}$ and local model $\fm_k$.
	
	\smallskip
	\algoHead{Protocol Steps:} The servers perform the following:
	\begin{itemize}
		\item[--] Construct $\share{\cdot}$-shares of ensemble $ E = \lbrace \fm_k \rbrace_{k=1}^{m}$ and validation dataset  $\CV = \cup_{k=1}^{m} \cv{k}$.
		
		 \item[--] Execute $\piZVec$ with $m$ as the input and obtain $\share{\cdot}$-shares of a vector $\valvec$.	
		
		\smallskip
		\item[--] For $k \in [1,m]:$
	\begin{itemize}
		
			\item[--] Execute $\pimPred$ with inputs as $\share{\fm_k}$ and $\share{\CV}$ and obtain $\share{\textsc{PREDS}_k}$, where $\textsc{Preds}_k = \fm_k(\CV)$ 
			
			\item[--] Execute $\piAcc$ with inputs as $\share{\textsc{Preds}_k}$ and $\share{\lbl_{\sss{\CV}}}$ and obtain  $\share{\valacc_k}$ as the output.
			
			\item[--] Locally subtract $\share{\cdot}$-shares of $\valacc_k$ with  $\threshold$ to obtain $\share{\valacc_k - \threshold }$.
			
			\item[--] Execute $\piComp$ with input as $\share{\valacc_k - \threshold}$ and obtain $\share{b'}$, where $b' = 1$ iff $ \valacc_k > \threshold$. Set the $k^{\text{th}}$ position in $\share{\valvec}$ as $\share{\valvec_k} = \share{b'}$ 
			
 			
			
			
	\end{itemize}
	

	\end{itemize}  
	
	\smallskip
	\algoHead{Output:} $\share{\cdot}$-shares of $\valvec$ and ensemble $E$.
	
\end{protocolbox}
\end{figure}

The security proof of  $\pistr$ protocol  as  stated in Theorem~\ref{thm:SNtrain} in Section~\ref{sec:train} is given  in Appendix~\ref{proof:SPfilter}.

\begin{figure}[h!]
	\begin{protocolbox}{$\piPred$}{SafeNet Prediction Protocol}{prot:pred} 
		\algoHead{Input:} $\share{\cdot}$-shares of vector $\valvec$ and ensemble $E$ among the servers. Client $\share{\cdot}$-shares query $\vx$ to the servers. 
		
		\smallskip
		\algoHead{Protocol Steps:} The servers perform the following:
		\begin{itemize}
			
			\item[--] Execute $\piZVec$ protocol with $L$ as the input, where $L$ denotes the number of distinct class labels and obtain $\share{\cdot}$-shares of $\vz$.	
			
			
			\item[--] For each $\fm_k \in E:$
			\begin{itemize}
				
				\item[--] Execute $\pimPred$ with inputs as $\share{\fm_k}$ and $\share{\vx}$. Obtain $\share{\textsc{\bf Preds}}$, where $\textsc{\bf Preds} = \fm_k(\vx)$. 
				
				\item[--] Execute $\piMultA$ to multiply $\valvec_k$ to each element of vector ${\textsc{\bf Preds}}$.
				
				\item[--] Locally add $\share{\vz} = \share{\vz} +  \share{\textsc{\bf Preds}}$ to update $\vz$. 
				
			\end{itemize}
			
			\item[--] Execute $\piAMX$ protocol with input as $\share{\vz}$ and obtain $\share{\textsc{op}}$ as the output.

			
			%
		\end{itemize}  
		
		\smallskip
		\algoHead{Output:}  $\share{\cdot}$-shares of $\textsc{op}$
		
	\end{protocolbox}
\end{figure}

\smallskip
\myparagraph{SafeNet Prediction}  Functionality $\FPred$  takes as input party id $\textsf{cid}$, $\share{\cdot}$-shares of client query $\vx$, vector $\valvec$ and  ensemble $E = \lbrace \share{\fm_k} \rbrace_{k=1}^{m}$ and outputs a value  $\textsc{op}$, the predicted class label by ensemble $E$ on query $\vx$.

 Protocol $\piPred$ realizes $\FPred$ as follows:
 Given  $\share{\cdot}$-shares of $\vx$,  $\valvec$ and ensemble $E$,
the  servers initialize a vector $\vz$ of all zeros of size $L$. For each model $\fm_k$ in the ensemble $E$, the servers compute $\share{\cdot}$-shares of  the prediction ${\textsc{\bf Preds}} =  \fm_k(\vx)$ in one-hot format. The element $\valvec_k$ in vector $\valvec$ is multiplied to each element in vector ${\textsc{\bf Preds}}$. The $\share{{\textsc{\bf Preds}}}$ vector is added to $\share{\vz}$ to update the model's vote towards the final prediction. If $\valvec_k = 0$, then after multiplication  vector ${\textsc{\bf Preds}}$  is a vector of zeros and does not contribute in the  voting process towards the final prediction. 
 The servers then compute the argmax of vector $\share{\vz}$ and receive  output $\share{\textsc{op}}$ from $\piAMX$, where  $\textsc{op}$ denotes the predicted class label by the ensemble. The appropriate $\share{\cdot}$-shares of $\textsc{op}$ is forwarded to the client for reconstruction.



\begin{restatable}{theorem}{SPpred} \label{thm:SPpred}
	Protocol $\piPred$ is secure  against adversary $\Apsc$ who poisons  $t$ out of $m$ data owners and corrupts $T$ out of $N$ servers.
\end{restatable}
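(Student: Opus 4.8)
The plan is to establish the two components bundled into ``security against $\Apsc$'' separately: first, that $\piPred$ securely realizes the ideal prediction functionality $\FPred$ in the sense of Definition~\ref{def:mpc_sec} when the adversary corrupts $T$ out of $N$ servers; and second, that the class label returned by $\piPred$ is poisoning-robust against the $t$ poisoned owners, which I would reduce to the certified-prediction guarantee of Theorem~\ref{thm:robustpredcorrect}.

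For the first component I would argue in the $\{\FZVec, \FmPred, \FMul, \FAmax\}$-hybrid model and invoke sequential composition, mirroring the proof of Theorem~\ref{thm:SNtrain} for $\pistr$. The key structural observation is that $\piPred$ never reconstructs any intermediate value towards the corrupt servers: every step is either a call to one of the sub-functionalities (each taking $\share{\cdot}$-shared inputs and returning $\share{\cdot}$-shared outputs) or the purely local update $\share{\vz} \gets \share{\vz} + \share{\textsc{Preds}}$, which requires no communication. Hence the whole transcript seen by the corrupt servers consists only of their own shares of $\vz$, of each prediction vector, of the $\valvec_k$, and of the final $\textsc{op}$.

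The simulator $\Sim$ would then proceed as follows. It emulates the $T$ corrupt servers, extracts the inputs that the corrupt owners secret-share through the $\FSh/\piSh$ interface, and forwards the extracted (possibly malformed) models together with the corrupt servers' shares of the client query $\vx$ to $\FPred$. For each invocation of $\FZVec$, $\FmPred$, $\FMul$, and $\FAmax$, $\Sim$ samples fresh shares for the corrupt servers according to the underlying secret-sharing scheme; by the $T$-privacy of the scheme any $T$ shares are distributed independently of the secret, so these simulated shares are identically (or computationally indistinguishably) distributed to the real ones. Finally, $\Sim$ receives $\textsc{op}$ from $\FPred$ and produces corrupt-server shares of $\textsc{op}$ consistent with it. Indistinguishability then follows step-by-step from the privacy of the secret-sharing scheme and the security of each building block. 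The second component is inherited from the aggregation itself: since each model contributes exactly one vote to $\vz$, and a filtered model contributes the all-zero vector after multiplication by $\valvec_k = 0$, the $t$ poisoned owners alter at most $t$ of the $m$ votes, so Theorem~\ref{thm:robustpredcorrect} (and Theorem~\ref{thm:LB} in the in-distribution case) guarantees that the majority class output by $\piAMX$ is unchanged below the stated bound on $t$.

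I expect the main obstacle to be the combined adversary: $\Sim$ must simultaneously extract the corrupt owners' arbitrarily malformed secret-shared models and simulate the corrupt servers' protocol view, while arguing that nothing about the honest owners' models or the client query leaks beyond the single label $\textsc{op}$ released by $\FPred$. Handling a malicious (rather than semi-honest) server corruption---where deviations inside the sub-protocols must be detected or turned into an abort---is the most delicate point, but it is confined to the security guarantees of the individual building blocks, so the composition argument carries it through.
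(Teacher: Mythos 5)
Your proposal is correct and follows essentially the same route as the paper's own proof: a simulation argument for the $T$ corrupted servers that walks through the sub-protocols ($\piZVec$, $\pimPred$, $\piMultA$, $\piAMX$), noting that the share update of $\vz$ is local and forcing the final simulated shares to be consistent with $\textsc{op}$, combined with a reduction of poisoning robustness to the paper's earlier analysis. The only (immaterial) differences are that you phrase the server-corruption part explicitly in the hybrid model with sequential composition where the paper simulates the sub-protocols directly, and that you cite Theorem~\ref{thm:robustpredcorrect} and Theorem~\ref{thm:LB} for robustness where the paper routes this through Theorem~\ref{thm:SNtrain}, which itself rests on Theorem~\ref{thm:LB}.
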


\begin{proof}
	The proof is given below in Appendix~\ref{proof:SPpred}.
\end{proof}	

\section{Security Proofs} 
\label{app:secproof}

For concise security proofs, we assume the adversary $\Apsc$ performs a semi-honest corruption in the SOC paradigm, but our proofs can also be extended to malicious adversaries in the MPC. We prove that  protocol $\pistr$  is secure against an adversary of type $\Apsc$. Towards this, we first argue that protocol $\pistr$ securely realizes the standard ideal-world functionality $\FpTrain$. We use simulation based security to prove our claim.
 Next, we argue that the ensemble $E$ trained using $\pistr$ protocol provides poisoning robustness against $\Apsc$.


\smallskip
\SPtrain*

\begin{proof} \label{proof:SPfilter}
	Let $\Apsc$ be a real-world adversary that semi-honestly corrupts $T$ out of $N$ servers at the beginning of the protocol $\pistr$. We now present the steps of the ideal-world adversary (simulator) $\Sim_f$ for $\Apsc$. Note that, in the semi-honest setting $\Sim_f$ already posses the input of $\Apsc$ and the final output shares of  $\valvec$. $\Sim_f$ acts on behalf of $N-T$ honest servers, sets their shares as random values in $\Z{\ell}$ and simulates each step of $\pistr$ protocol to the corrupt servers as follows:

	\begin{itemize}
		
		\item[--] No simulation is required to construct $\share{\cdot}$-shares of ensemble $E$ and validation dataset $\CV$ as it happens locally.
		
		\smallskip
		\item[--] $\Sim_f$ simulates messages on behalf of honest servers as a part of the protocol steps of $\piZVec$ with public value $m$ as the input and eventually sends and receives appropriate $\share{\cdot}$-shares of $\valvec$ to and from $\Apsc$.
		
		\smallskip
		\item[--] For $k \in [1,m]$:
		\begin{itemize}
			\smallskip
			\item[--] $\Sim_f$ simulates messages on behalf  of honest servers, as a part of the protocol steps of $\pimPred$, with inputs to the protocol as $\share{\cdot}$-shares of $\fm_k$ and $\CV$ and eventually sends and receives appropriate $\share{\cdot}$-shares of $\textsc{PREDS}_k$ to and from $\Apsc$.
			
			\smallskip
			\item[--] $\Sim_f$ simulates messages on behalf  of honest servers, as a part of the protocol steps of $\piAcc$, with inputs to the protocol as  $\share{\cdot}$-shares of $\textsc{PREDS}_k$ and $\lbl_{\CV}$ and eventually sends and receives appropriate $\share{\cdot}$-shares of $\valacc_k$ to and from $\Apsc$.
			
			\smallskip
			\item[--] No simulation is required for subtraction with threshold $\threshold$ as it happens locally.
			
			\smallskip
			\item[--] $\Sim_f$ simulates messages on behalf  of honest servers, as a part of the protocol steps of $\piComp$, with inputs to the protocols as $\share{\cdot}$-shares of $\valacc-\threshold$ and at the end $\Sim_f$ instead sends the original shares of $\valvec_k$  as shares of $b'$ associated to $\Apsc$.
			
			\smallskip
			\item[--] No simulation is required to assign $\share{\valvec_k} = \share{b'}$.
		
		\end{itemize}
		
		
		
	\end{itemize}
The proof now simply follows from the fact that  simulated view and real-world view of the adversary are computationally indistinguishable and concludes that $\pistr$ securely realizes functionality $\FpTrain$.

Now given the output of $\pistr$ protocol is an ensemble $E$, we showed in the proof of Theorem \ref{thm:LB} that $E$ correctly classifies a sample with probability at least $p_c$. As a result the underlying trained model also provides  poisoning robustness  against $\Apsc$.
	 
\end{proof}

We use a similar argument to show protocol $\piPred$ is secure against adversary $\Apsc$. 
\smallskip
\SPpred*
\begin{proof} \label{proof:SPpred}
Let $\Apsc$ be a real-world adversary that poisons $t$ out of $m$ owners and semi honestly corrupts $T$ out of $N$ servers at the beginning of $\piPred$ protocol. We present steps of the ideal-world adversary (simulator) $\Sim_f$ for $\Apsc$. $\Sim_f$ on behalf of the honest servers, sets their shares as random values in $\Z{\ell}$ and simulates each step of $\piPred$ protocol to the corrupt servers as follows:
\begin{itemize}
	\item[--]  $\Sim_f$ simulates messages on behalf of honest servers as a part of the protocol steps of $\piZVec$ with public value $L$ as the input and eventually sends and receives appropriate $\share{\cdot}$-shares of $\vz$ to and from $\Apsc$.

	\smallskip
	\item[--] For $k \in [1,m']$:
	
	\begin{itemize}
		\smallskip
		\item[--] $\Sim_f$ simulates messages on behalf  of honest servers, as a part of the protocol steps of $\pimPred$, which takes input as $\share{\cdot}$-shares of $\fm_k$ and $\vx$. $\Sim_f$ eventually sends and receives appropriate $\share{\cdot}$-shares of ${\bf Preds}$ to and from $\Apsc$.
		
		\smallskip
		\item[--] For every multiplication of $\share{\valvec_k}$ with respect to  each element in ${\bf Preds}$,  $\Sim_f$ simulates messages on behalf  of honest servers, as a part of the protocol steps of $\piMultA$, which takes input as $\share{\cdot}$-shares of ${\bf Preds }_j$  and $\valvec_k$. $\Sim_f$ eventually sends and receives appropriate $\share{\cdot}$-shares of $\valvec_k  \times {\bf Preds}_j$ to and from $\Apsc$.
		
		\smallskip
		\item[--] No simulation is required to update $\share{\vz}$ as addition happens locally.
		
	\end{itemize} 
	
	\item[--] $\Sim_f$ simulates messages on behalf  of honest servers, as a part of the protocol steps of $\piAMX$, which takes input as $\share{\cdot}$-shares of $\vz$. At the end $\Sim_f$ instead forwards the original $\share{\cdot}$-shares of $\textsc{op}$ associated to  $\Apsc$.
	
	
	
\end{itemize}

\noindent
The proof now simply follows from the fact that simulated view and real-world view of the adversary are computationally indistinguishable. Poisoning robustness argument  follows from the fact that the ensemble $E$ used for prediction was trained using protocol $\pistr$ which was shown to be secure against $\Apsc$ in Theorem \ref{thm:SNtrain}.
	\end{proof}	

This concludes the security proofs of our training and prediction protocols.
\section{SafeNet Extensions} \label{apndx:addalgs}

\subsection{Inference phase in Transfer Learning Setting} \label{apndx:TransferLearning}
We provide a modified version of  SafeNet's Inference algorithm in the transfer learning setting, to improve the running time and communication complexity of SafeNet.  Algorithm 3 provides the details of SafeNet's prediction phase below. 

\begin{algorithm}[h!]
	
	\caption*{{\bf Algorithm 3} SafeNet Inference for Transfer Learning Setting} 
	\begin{algorithmic}
		
		\State Input:  Secret-shares of backbone model $\fm_B$,  ensemble of $m$ fine-tuned models $E = \{\fm_1, \ldots, \fm_m\}$, vector $\valvec$ and client query $\vx$.
		
		\State{\Comment{\footnotesize MPC computation in secret-shared format}}
		
		\item[--] Construct vector $\vz$ of all zeros of size $L$, where $L$ denotes the number of distinct class labels. 
		
		\State -- Run forward pass on $\fm_B$ with input  $\vx$ upto its last $l$ layers, where $\vp$ denotes the output vector from that layer.
		
		\State -- For $k \in [1,m]:$
		
		\begin{itemize}
			\item[-] Run forward pass on the last $l$ layers of $\fm_k$ with input as $\vp$. Let the output of the computation be $\textsc{\bf Preds}$, which is one-hot encoding of the predicted label.
			\item[-] Multiply $\valvec_k$ to each element of $\textsc{\bf Preds}$. 
			\item[-] Add $\vz = \vz + \textsc{\bf Preds}$.
		\end{itemize}
	
		\State -- Run argmax  with input as $\vz$ and obtain $\textsc{op}$ as the output.

		\State \Return $\textsc{op}$
	\end{algorithmic}
	\label{alg:tl_pred}
\end{algorithm} 

\subsection{Training with Computationally Restricted Owners} \label{apndx:TrainCRO}
 In this section we provide a modified version of  SafeNet's Training Algorithm, to accommodate when a subset of data owners are computationally restricted, i.e., they can not train their models locally.  Algorithm 4 provides the details of SafeNet's training steps below.

\begin{algorithm}[h!]
	
	\caption*{{\bf Algorithm 4} SafeNet Training with Computationally Restricted Owners} 
	\begin{algorithmic}
		
		\State Input:  $m$ total data owners of which $m_{r}$ subset of owners are computationally restricted, each owner $\usr_k$'s   dataset $\ld_k$. 	
		
		\State{\Comment{\footnotesize Computationally Restricted Owner's local computation in plaintext } }
		\State -- For $k \in [1, m_{r}]:$	
		\begin{itemize}
			\item[-] Separate out $\cv{k}$ from $\ld_k$.
			\item[-] Secret-share cross-validation dataset $\cv{k}$ and training dataset $\ld_k \setminus \cv{k}$ to servers.
		\end{itemize}

		\State{\Comment{\footnotesize Computationally Unrestricted Owner's local computation in plaintext } }
		\State -- For $k \in [m_{r+1},m]:$	
		\begin{itemize}
			\item[-] Separate out $\cv{k}$ from $\ld_k$.  Train $\fm_k$ on $\ld_k \setminus \cv{k}$.
			\item[-] Secret-share $\cv{k}$ and $\fm_k$ to servers.
		\end{itemize}
		
		
		\State{\Comment{\footnotesize MPC computation in secret-shared format}}
		
		\State  1. For $k \in [1,m_r]:$ 
		\begin{itemize}
			\item[-] Train $\fm_k$ on $\ld_k \setminus \cv{k}$.
		\end{itemize}
		
		\State 2. Construct a common validation dataset $\CV = \cup_{i=1}^{m} \cv{i}$ and collect ensemble of models $ E = \lbrace \fm_i \rbrace_{i=1}^{m}$
		
		\State 3. Initialize a vector $\valvec$ of zeros and of size $m$. 
		
		\State  4. For $k \in [1,m]:$ 
		\begin{itemize}
			\item[-] $\valacc_k = Accuracy(\fm_k, \CV)$ 
			
			
			
			\item[-] If $\valacc_k> \threshold$: 
			\begin{itemize}
				\item[--] Set $\valvec_k = 1$   
				
			\end{itemize}			
			
		\end{itemize}
		\State \Return $E$ and $\valvec$
	\end{algorithmic}
	\label{alg:crtrain}
\end{algorithm} 

\section{Additional Experiments}
\label{app:bench}

\subsection{Evaluation of SafeNet Extensions} \label{sec:apndx_extension}

\paragraph{Integration Testing}

Here, we  evaluate the performance of SafeNet by varying the concentration parameter $\alpha$ to manipulate the degree of data similarity among the owners. The experiments are performed with the same neural network architecture from Section~\ref{sec:DNN} on the Fashion dataset. Figure~\ref{fig:Dirchlet} gives a comprehensive view of the variation in test accuracy and attack success rate for backdoor and targeted attacks over several values of $\alpha$. 

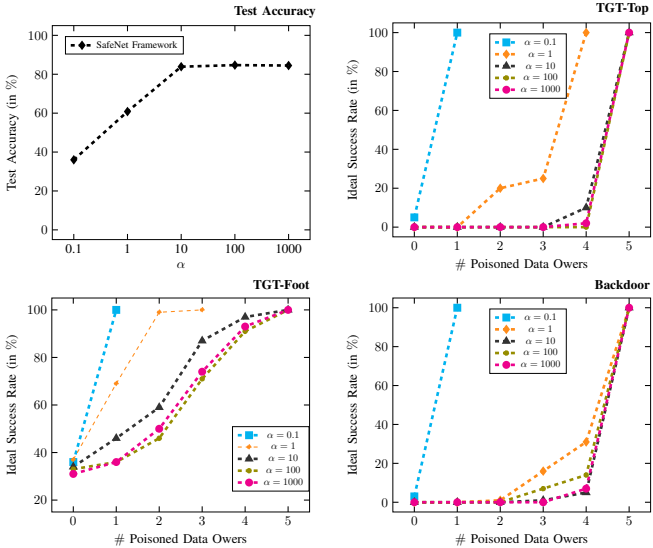
\begin{figure}[thb!]
	\begin{minipage}{.25\textwidth}
		\scalebox{0.5}{
			\begin{tikzpicture}
				\begin{axis}[legend style={at={(0.05,0.95)},anchor=north west , nodes={scale=0.75, transform shape}}, xlabel={$\alpha$}, ylabel={Test Accuracy (in $\%$)}, ymin = -5, ymax = 105, xtick = data, symbolic x coords={0.01, 0.1, 1, 10, 100, 1000}]
					
					\addplot[color = black,every mark/.append style={solid}, mark=diamond,dashed,line width = 2] coordinates { 
					(0.1,36.01) (1,60.82) (10, 83.88) (100,84.70) (1000, 84.48)};
					\addlegendentry{SafeNet Framework}

				\end{axis}
			\node[align=right,font=\bfseries, xshift=9.0em, yshift=1.0em] (title) at (current bounding box.north) {Test Accuracy};
			\end{tikzpicture}
		}
	\end{minipage}%
	\begin{minipage}{.25\textwidth}
		\scalebox{0.5}{
			\begin{tikzpicture}
				\begin{axis}[legend style={at={(0.375,0.95)},anchor=north west , nodes={scale=0.75, transform shape}}, xlabel={$\#$  Poisoned Data Owers}, ylabel={Ideal Success Rate (in $\%$)}, ymin = -5, ymax = 105, xtick = {0,1,2,3,4,5}]
					
					
					\addplot[color = cyan!90, every mark/.append style={solid}, mark=square*,dashed,line width = 2] plot coordinates {(0,5) (1,100)};
					\addlegendentry{$\alpha = 0.1$}
					
					\addplot[color = orange!90,mark = diamond*,every mark/.append style={solid},dashed,line width = 2] plot coordinates {(0,0) (1,0) (2,20) (3,25) (4,100)};
					\addlegendentry{$\alpha=1$}
					
					\addplot[color = black!80,mark=triangle*,every mark/.append style={solid},dashed,line width = 2] plot coordinates {(0,0) (1,0) (2, 0) (3,0) (4,10) (5,100)};
					\addlegendentry{$\alpha= 10$}
					
					\addplot[color = olive, mark=star,dashed,every mark/.append style={solid}, line width = 2] plot coordinates {(0,0) (1,0) (2, 0) (3,0) (4,0) (5,100)};
					\addlegendentry{$\alpha= 100$}
					
					\addplot[color = magenta, every mark/.append style={solid}, mark=*,dashed, every mark/.append style={solid}, line width = 2] plot coordinates {(0,0) (1,0) (2, 0) (3,0) (4,2) (5,100)};
					\addlegendentry{$\alpha= 1000$}

				\end{axis}
				\node[align=right,font=\bfseries, xshift=9.5em, yshift=1.0em] (title) at (current bounding box.north) {TGT-Top};
			\end{tikzpicture}
		}
	\end{minipage}%

	\begin{minipage}{.25\textwidth}
	\scalebox{0.5}{
		\begin{tikzpicture}
			\begin{axis}[legend style={at={(0.7,0.4)},anchor=north west , nodes={scale=0.75, transform shape}}, xlabel={$\#$  Poisoned Data Owers}, ylabel={Ideal Success Rate (in $\%$)}, ymin = 15, ymax = 105, xtick = {0,1,2,3,4,5}]
				

				\addplot[color = cyan!90, mark=square*,dashed, every mark/.append style={solid}, line width = 2] plot coordinates {(0,36) (1,100)};
				\addlegendentry{$\alpha = 0.1$}
				
				\addplot[color = orange!90,mark=diamond*,dashed, thick] plot coordinates {(0,37) (1,69) (2,99) (3,100)};
				\addlegendentry{$\alpha = 1$}
				
				\addplot[color = black!80,mark=triangle*,dashed,every mark/.append style={solid}, line width = 2] plot coordinates {(0,34) (1,46) (2, 59) (3,87) (4,97) (5,100)};
				\addlegendentry{$\alpha = 10$} 
				
				\addplot[color = olive, mark=star,dashed,every mark/.append style={solid}, line width = 2] plot coordinates {(0,33) (1,36) (2, 46) (3,71) (4,91) (5,100)};
				\addlegendentry{$\alpha= 100$}
				
				\addplot[color = magenta, every mark/.append style={solid}, mark=*,dashed,every mark/.append style={solid}, line width = 2] plot coordinates {(0,31) (1,36) (2, 50) (3,74) (4,93) (5,100)};
				\addlegendentry{$\alpha= 1000$}

			\end{axis}
			\node[align=right,font=\bfseries, xshift=9.5em, yshift=1.0em] (title) at (current bounding box.north) {TGT-Foot};
		\end{tikzpicture}
	}
	\end{minipage}%
	\begin{minipage}{.25\textwidth}
		\scalebox{0.5}{
			\begin{tikzpicture}
			\begin{axis}[legend style={at={(0.375,0.95)},anchor=north west , nodes={scale=0.75, transform shape}}, xlabel={$\#$  Poisoned Data Owers}, ylabel={Ideal Success Rate (in $\%$)}, ymin = -5, ymax = 105, xtick = {0,1,2,3,4,5}]
					
					
					\addplot[color = cyan!90, mark=square*,dashed, every mark/.append style={solid}, line width = 2] plot coordinates {(0,3) (1,100)};
					\addlegendentry{$\alpha = 0.1$}
					
					\addplot[color = orange!90,mark=diamond*,dashed, thick,every mark/.append style={solid}, line width = 2] plot coordinates {(0,0) (1,0) (2,1) (3,16) (4,31) (5,100)};
					\addlegendentry{$\alpha=1$}
					
					\addplot[color = black!80,mark=triangle*,dashed,every mark/.append style={solid}, line width = 2] plot coordinates {(0,0) (1,0) (2, 0) (3,1) (4,5) (5,100)};
					\addlegendentry{$\alpha = 10$}
					
					\addplot[color = olive, mark=star,dashed, every mark/.append style={solid}, line width = 2] plot coordinates {(0,0) (1,0) (2, 0) (3,7) (4,14) (5,100)};
					\addlegendentry{$\alpha= 100$}
					
					\addplot[color = magenta, mark= *,dashed, every mark/.append style={solid}, line width = 2] plot coordinates {(0,0) (1,0) (2, 0) (3,0) (4,7) (5,100)};
					\addlegendentry{$\alpha= 1000$}

				\end{axis}
				\node[align=right,font=\bfseries, xshift=9.5em, yshift=1.0em] (title) at (current bounding box.north) {Backdoor};
			\end{tikzpicture}
		}
	\end{minipage}
	\caption{\small Test Accuracy and Worst-case Adversarial Success in a three layer neural network model trained on Fashion dataset using  SafeNet  for varying data distributions. Parameter $\alpha$ dictates the similarity of distributions between the owners. Higher values of $\alpha$ denote greater similarity in data distributions among the owners and results in increased SafeNet robustness.} \label{fig:Dirchlet}
	\vspace{-2mm}
\end{figure}

We observe that as $\alpha$ decreases, i.e., the underlying data distribution of the owners becomes more non-iid, the test accuracy of SafeNet starts to drop. This is expected as there will be less agreement between the different models, and the majority vote will have a larger chance of errors. In such cases it is easier for the adversary to launch an attack as there is rarely any agreement among the models in the ensemble, and the final output is swayed towards the target label of attackers' choice. Figure~\ref{fig:Dirchlet} shows that for both targeted and backdoor attacks,  SafeNet  holds up  well until  $\alpha$ reaches extremely small values ($\alpha = 0.1$), at which point we observe the robustness break down.
However, the design of  SafeNet  allows us to detect difference in owners' distributions at early stages of our framework. 
For instance, we experiment for $\alpha = 0.1$ and observe that the average $\valacc$ accuracy of the models is $17\%$. Such low accuracies for most of the models in the ensemble indicate non-identical distributions and we recommend not to use SafeNet in such cases.
  

\paragraph{Low Resource Users}
We instantiate our Fashion dataset setup in the 3PC setting and assume $2$ out of $10$ data owners are computationally restricted. We observe SafeNet still runs  $1.82\times$ faster and requires $3.53\times$ less communication compared to the existing PPML framework, while retaining its robustness against poisoning and privacy attacks.

{\begin{table*}[h!]
		
		\centering \scriptsize
		\begin{adjustbox}{max width=\textwidth}{  
				\begin{tabular}{c  c  c  r  r  r  r  r  r r}
					
					
					\multirow{2}{*}{MPC} & \multirow{2}{*}{Setting} & \multirow{2}{*}{Framework} &  \multicolumn{1}{c}{\multirow{2}{*}{Training Time (s)}} & \multicolumn{1}{c}{\multirow{2}{*}{Communication (GB)}} & \multicolumn{2}{c}{Backdoor Attack} & \multicolumn{3}{c}{Targeted Attack}\\ \cmidrule{6-10}
					
					&   &   &   &  & Test Accuracy & Success Rate  & Test Accuracy & Success Rate-Top & Success Rate-Foot\\
					
					\midrule
					
					\multirow{2}{*}{3PC \cite{AFLNO16}} & \multirow{2}{*}{Semi-Honest} & PPML  & n$\times$243.55 & n$\times$55.68 & $89.14\%$ & $100\%$ & $87.34\%$ & $83\%$ & $90\%$\\
					
					& & SafeNet & $10.03$ & $2.05$ & $88.68\%$ & $4\%$ & $88.65\%$ & $1\%$ & $10\%$\\
					

					\midrule
					
					\multirow{2}{*}{4PC \cite{DEK21}} & \multirow{2}{*}{Malicious} & PPML  & n$\times$588.42 & n$\times$105.85 & $89.14\%$ & $100\%$ & $87.22\%$ & $83\%$ & $90\%$\\
					
					&	& SafeNet & $23.39$ & $3.78$ & $88.65\%$ & $4\%$ & $88.65\%$ & $1\%$ & $10\%$\\
					

				\end{tabular}
			}
		\end{adjustbox}
		\caption{\small  Training time (in seconds) and Communication (in GB) over a LAN network for  traditional PPML and SafeNet framework training a  multiclass logistic regression on MNIST. n denotes the number of epochs in the PPML framework. The time and communication reported for SafeNet is for end-to-end execution. Test Accuracy and Success Rate are given for a single poisoned owner.}
		\label{tab:MNIST_logreg}
		
	\end{table*}
	
}

\smallskip
\subsection{Logistic Regresssion, Multiclass Classification} \label{sec:log_multiclass}
We use the same strategies for the Backdoor and Targeted attacks on the MNIST dataset. For  BadNets, we select the initial class $y_s=4$ and the target label $y_t=9$, and use the same $y_t=9$ for the targeted attack.
 Table \ref{tab:MNIST_logreg} provides a detailed analysis of the training time, communication, test accuracy, and success rate for both frameworks, in presence of a single poisoned owner. 
 The worst-case adversarial success for SafeNet  is in Figure~\ref{fig:RA_LogMultClass}. The slow rise in the success rate of the adversary across multiple attacks shows the robust accuracy property of our framework translates smoothly  for the case of a multi-class classification problem.   



\begin{figure}[htb!]
	\centering
	\begin{minipage}{0.3\textwidth}
		\scalebox{0.6}{
			\begin{tikzpicture}
				\begin{axis}[legend style={at={(0.05,0.95)},anchor=north west , nodes={scale=0.75, transform shape}}, xlabel={$\#$ Poisoned Data Owers}, ylabel={Ideal Success Rate (in $\%$)}, ymin = -5, ymax = 105, xtick = data,height=6.5cm,width=9cm]
					
					\addplot[color = cyan!90,mark=triangle*,dashed,every mark/.append style={solid}, line width = 2] coordinates { (0,1) (1,1) (2, 1) (3,1) (4,2) (5,2) (6,2) (7,2) (8,2)(9,4)(10,100)};
					\addlegendentry{SafeNet-TGT-Top}

					\addplot[color = red!90,every mark/.append style={solid}, mark=diamond*,dashed,every mark/.append style={solid}, line width = 2] plot coordinates {(0,11) (1,12) (2,13) (3,13) (4,16) (5,18) (6,21) (7,27) (8,33) (9,48) (10,100)};
					\addlegendentry{SafeNet-TGT-Foot}
					
					\addplot[color = black!75, every mark/.append style={solid}, mark=square*,dashed,every mark/.append style={solid}, line width = 2] coordinates { (0,0) (1,0) (2, 0) (3,0) (4,1) (5,2) (6,4) (7,6) (8,10)(9,17)(10,100)};
					\addlegendentry{SafeNet-Backdoor}
					
					


				\end{axis}
			\end{tikzpicture}
		}
	\end{minipage}%
		\caption{\small Worst-case adversarial success of multi-class logistic regression  on MNIST in the SafeNet framework for backdoor and targeted attacks. The adversary can change the set of $c$ poisoned owners per sample. SafeNet achieves certified robustness up to 9 poisoned owners out of 20 against backdoor and TGT-TOP attacks. The TGT-Foot attack targeting low-confidence samples has slightly higher success, as expected.} \label{fig:RA_LogMultClass}	
	\vspace{-2mm}
\end{figure}
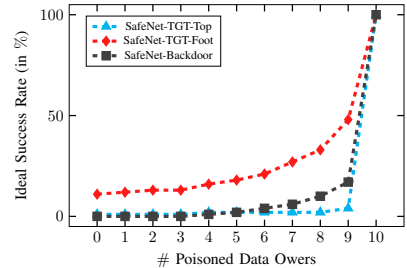

\subsection{Evaluation on Deep Learning Models}
\smallskip
\myparagraph{Experiments on Fashion Dataset} We present  results on  one and two layer deep neural networks trained on the Fashion dataset. 
%
We perform the same set of backdoor and targeted attacks as described in Section~\ref{sec:exp}. Tables~\ref{tab:Fashion_nn12TC} and \ref{tab:Fashion_nn12_testacc} provide detailed analysis of the training time, communication, test accuracy, and success rate for traditional PPML and SafeNet frameworks.  We observe similar improvements, where for instance in the 4PC setting, SafeNet has $42\times$ and $43\times$ improvement in training time and communication complexity over the PPML framework, for $n=10$ epochs for a two hidden layer neural network. Figure~\ref{fig:RA_APDXFASHION} shows the worst-case attack success in SafeNet (where the attacker can choose the subset of corrupted owners per sample) and the results are similar to Figure~\ref{fig:RA_FASHION}.

\begin{figure}[h!]
	\begin{minipage}{.25\textwidth}
		\scalebox{0.5}{
			\begin{tikzpicture}
				\begin{axis}[legend style={at={(0.02,0.98)},anchor=north west , nodes={scale=0.6, transform shape}}, xlabel={$\#$ Poisoned Data Owers}, ylabel={Ideal Success Rate (in \%)}, ymin = -5, ymax = 105, xtick = data]
					
					\addplot[color = red,mark=*,dashed,every mark/.append style={solid}, line width = 2] coordinates { (0,0) (1,0) (2, 0) (3,0) (4,0) (5,100)};
					\addlegendentry{SafeNet-TGT-Top}
					
					\addplot[color = black!80,mark=triangle*,dashed,every mark/.append style={solid}, line width = 2] plot coordinates {(0,12) (1,18) (2, 21) (3,26) (4,44) (5,100)};
					\addlegendentry{SafeNet-TGT-Random}
					
					\addplot[color = orange!90,mark=diamond*,dashed,every mark/.append style={solid}, line width = 2] plot coordinates {(0,36) (1,54) (2,66) (3,87) (4,93) (5,100)};
					\addlegendentry{SafeNet-TGT-Foot}
					
					\addplot[color = cyan!90,mark=square*,dashed,every mark/.append style={solid}, line width = 2] plot coordinates {(0,0) (1,0) (2,0) (3,2) (4,15) (5,100)};
					\addlegendentry{SafeNet-Backdoor}


				\end{axis}
				\node[align=center,font=\bfseries, xshift=-3.5em, yshift=1.0em] (title) at (current bounding box.north) {1-Layer NN};
			\end{tikzpicture}
		}
	\end{minipage}%
	\begin{minipage}{.25\textwidth}
		\scalebox{0.5}{
			\begin{tikzpicture}
				\begin{axis}[legend style={at={(0.02,0.98)},anchor=north west , nodes={scale=0.6, transform shape}}, xlabel={$\#$ Poisoned Data Owers}, ylabel={Ideal Success Rate (in $\%$)}, ymin = -5, ymax = 105, xtick = data]
					
					\addplot[color = red,mark=*,dashed,every mark/.append style={solid}, line width = 2] coordinates { (0,0) (1,0) (2, 0) (3,0) (4,0) (5,100)};
					\addlegendentry{SafeNet-TGT-Top}
					
					\addplot[color = black!80,mark=triangle*,dashed,every mark/.append style={solid}, line width = 2] plot coordinates {(0,10) (1,17) (2, 21) (3,28) (4,36) (5,100)};
					\addlegendentry{SafeNet-TGT-Random}
					
					\addplot[color = orange!90,mark=diamond*,dashed,every mark/.append style={solid}, line width = 2] plot coordinates {(0,41) (1,52) (2,72) (3,88) (4,99) (5,100)};
					\addlegendentry{SafeNet-TGT-Foot}
					
					\addplot[color = cyan!90,mark=square*,dashed,every mark/.append style={solid}, line width = 2] plot coordinates {(0,0) (1,0) (2,1) (3,11) (4,28) (5,100)};
					\addlegendentry{SafeNet-Backdoor}


				\end{axis}
				\node[align=center,font=\bfseries, xshift=-3.5em, yshift=1.0em] (title) at (current bounding box.north) {2-Layer NN};
			\end{tikzpicture}
		}
	\end{minipage}%
	\caption{\small Worst-case adversarial success  of one and two layer Neural Networks  on FASHION  dataset in  SafeNet  framework for varying poisoned owners.} \label{fig:RA_APDXFASHION}

\end{figure}
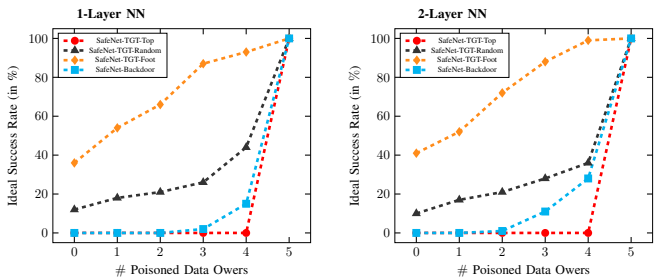

{\begin{table}[h!]
		\begin{adjustbox}{max width= 0.48\textwidth}{  
				\begin{tabular}{c  c  c	 c  r  r}
					
					
					\multirow{1}{*}{MPC } & \multirow{1}{*}{ Setting} & \multirow{1}{*}{No. Hidden Layers} & \multirow{1}{*}{Framework} &  \multicolumn{1}{c}{\multirow{1}{*}{Training Time (s)}} & \multicolumn{1}{c}{\multirow{1}{*}{Communication (GB)}} \\ \cmidrule{1-6}

					\multirow{5}{*}{3PC \cite{AFLNO16}} & \multirow{5}{*}{Semi-Honest} & 	\multirow{2}{*}{1} & PPML  & n$\times$382.34 & n$\times$ 96.37 \\
					
					&	& & SafeNet & $65.71$ & $14.58$\\
					
					\cmidrule{3-6}
					
					&	& \multirow{2}{*}{2} & PPML & n$\times$474.66 & n$\times$ 125.58\\
					
					&	&  & SafeNet & $108.12$ & $27.98$\\
					
					
					
					
					\midrule
					
					\multirow{5}{*}{4PC \cite{DEK21}} & \multirow{5}{*}{Malicious} & 	\multirow{2}{*}{1} & PPML  & n$\times$869.12 & n$\times$ 174.12\\
					
					&	& & SafeNet & $152.68$ & $26.89$\\
					
					\cmidrule{3-6}
					
					&	& \multirow{2}{*}{2} & PPML & n$\times$1099.06 & 
					n$\times$227.23\\
					
					&	&  & SafeNet & $258.72$ & $51.66$\\
					
					
					

				\end{tabular}
			}
		\end{adjustbox}
        
		\caption{\small  Training Time (in seconds) and Communication (in GB) of PPML and SafeNet frameworks for  one and two  layer neural network  on  Fashion dataset, where n denotes the number of epochs.  The time and communication reported for SafeNet framework is for end-to-end execution. }
		\label{tab:Fashion_nn12TC}
		
	\end{table}
	
}

{\begin{table}[h!]
		\begin{adjustbox}{max width= 0.49 \textwidth}{  
				\begin{tabular}{c  c  c	 c  r  r  r  r}
					
					
					\multirow{2}{*}{MPC} & \multirow{2}{*}{ Setting} & \multirow{2}{*}{No. Hidden Layers} & \multirow{2}{*}{Framework} &\multirow{2}{*}{Test Accuracy} & \multicolumn{1}{c}{Backdoor Attack} & \multicolumn{2}{c}{Targeted Attack}\\ \cmidrule{6-8}
					
					&   &  &  &  & Success Rate  & Success Rate-Top & Success Rate-Foot\\
					
					\midrule
					
					\multirow{5}{*}{3PC \cite{AFLNO16}} & \multirow{5}{*}{Semi-Honest} & 	\multirow{2}{*}{1} & PPML & $82.40\%$ & $100\%$ & $100\%$ & $100\%$\\
					
					&	& & SafeNet & $84.45\%$ & $0\%$ & $0\%$ & $38\%$\\
					
					\cmidrule{3-8}
					
					&	& \multirow{2}{*}{2} & PPML & $83.92\%$ & $100\%$ & $100\%$ & $100\%$\\
					
					&	&  & SafeNet & $84.93\%$ & $0\%$ & $0\%$ & $46\%$\\
					
					\midrule
					
					\multirow{5}{*}{4PC \cite{DEK21}} & \multirow{5}{*}{Malicious} & 	\multirow{2}{*}{1} & PPML  & $82.82\%$ & $100\%$ & $100\%$ & $100\%$\\
					
					&	& & SafeNet & $84.44\%$ & $0\%$ & $0\%$ & $38\%$\\
					
					\cmidrule{3-8}
					
					&	& \multirow{2}{*}{2} & PPML & $83.80\%$ & $100\%$ & $100\%$ & $100\%$\\
					
					&	&  & SafeNet & $84.86\%$ & $0\%$ & $0\%$ & $46\%$\\

				\end{tabular}
			}
		\end{adjustbox}
        
		\caption{\small  Test Accuracy and Success Rate of PPML and SafeNet frameworks for  one and two  layer neural network  on Fashion dataset, in presence of a  single poisoned owner.}
		\label{tab:Fashion_nn12_testacc}
		
	\end{table}
	
}

\begin{table}[ht!]

	\begin{adjustbox}{max width=0.5\textwidth}{  
			\begin{tabular}{c  c  c	 r  r}
				
				
				MPC  &  Setting  &   Framework & Training Time (s) & Communication (GB)\\
				
				\midrule
				
				
				
				
				\multirow{4}{*}{3PC} & \multirow{2}{*}{Semi-Honest \cite{AFLNO16}} & PPML & n$\times$8.72 & n$\times$0.87 \\
				
				&	& SafeNet & $5.79$ & $1.32$\\
				
				\cmidrule(lr){2-5}
				
				& \multirow{2}{*}{Malicious \cite{DEK21}} & PPML & n$\times$223.15 & n$\times$16.49 \\
				
				&	& SafeNet & $179.58$ & $19.29$\\

				\midrule
				
				\multirow{2}{*}{4PC} & \multirow{2}{*}{Malicious \cite{DEK21}} & PPML & n$\times$18.54 & n$\times$1.69\\
				
				&	& SafeNet & $14.67$ & $2.53$\\
				
				
			\end{tabular}
		}
	\end{adjustbox}
	\centering \scriptsize
	\caption{\small  Training Time (in seconds) and Communication (in GB)  for training a single layer neural network model on the Adult dataset. n denotes the number of epochs required for training the the neural network in the PPML framework. The  values reported  for SafeNet are for its total execution.}
	\label{tab:Adult_TC}
	\vspace{-2mm}
\end{table}

\begin{figure}[h!]
	

	
	\begin{minipage}{.25\textwidth}
		\scalebox{0.45}{
			\begin{tikzpicture}
				\begin{axis}[legend style={at={(0.2,0.35)},anchor=south west, nodes={scale=0.75, transform shape}}, xlabel={$\#$ Poisoned Data Owers}, ylabel={Success Rate (in $\%$)}, xtick = data]

					\addplot[color = red!805,mark=*, dashed, every mark/.append style={solid}, line width = 2] coordinates { (0,12) (1, 100) (2, 100) (3, 100) (4,100) (5, 100) (6,100) (7,100) (8,100) (9,100) (10,100)};
					\addlegendentry{PPML Framework}

					\addplot[color = black!80,mark=square*, dashed, every mark/.append style={solid}, line width = 2] plot coordinates {(0,12) (1, 12) (2, 11) (3,10) (4,11) (5,12) (6,15) (7,14) (8,15) (9,16) (10,100)};
					\addlegendentry{SafeNet Framework}

				\end{axis}
				\node[align=right,font=\bfseries, xshift=8.5em, yshift=1.0em] (title) at (current bounding box.north) {(a) Backdoor};
			\end{tikzpicture}
		}
	\end{minipage}%
	\begin{minipage}{.25\textwidth}
		\scalebox{0.45}{
			\begin{tikzpicture}
				\begin{axis}[legend style={at={(0.2,0.35)},anchor=south west, nodes={scale=0.75, transform shape}}, xlabel={$\#$ Poisoned Data Owers}, ylabel={Success Rate (in $\%$)}, xtick = data]

					\addplot[color = red!805,mark=*, dashed, every mark/.append style={solid}, line width = 2] coordinates { (0,0) (1, 100) (2, 100) (3, 100) (4,100) (5, 100) (6,100) (7,100) (8,100) (9,100) (10,100)};
					\addlegendentry{PPML Framework}

					\addplot[color = black!80,mark=square*, dashed, every mark/.append style={solid}, line width = 2] plot coordinates {(0,0) (1, 0) (2, 0) (3,0) (4,0) (5,0) (6,0) (7,0) (8,0) (9,0) (10,100)};
					\addlegendentry{SafeNet Framework}
					

				\end{axis}
				\node[align=right,font=\bfseries, xshift=8.5em, yshift=1.0em] (title) at (current bounding box.north) {(b) Targeted};
			\end{tikzpicture}
		}
	\end{minipage}%
	
	\centering
	\begin{minipage}{0.25\textwidth}
		\scalebox{0.45}{
			\begin{tikzpicture}
				\begin{axis}[legend style={at={(0.05,0.95)},anchor=north west , nodes={scale=0.75, transform shape}}, xlabel={$\#$ Poisoned Data Owers}, ylabel={Ideal Success Rate (in $\%$)}, ymin = -5, ymax = 105, xtick = data]
					
					\addplot[color = orange!90,mark=triangle*,dashed,every mark/.append style={solid}, line width = 2] coordinates { (0,0) (1,0) (2, 0) (3,0) (4,0) (5,0) (6,0) (7,0) (8,0)(9,0)(10,100)};
					\addlegendentry{SafeNet-Targeted}

					\addplot[color = cyan!90,every mark/.append style={solid},mark=diamond*,dashed,every mark/.append style={solid}, line width = 2] plot coordinates {(0,10) (1,10) (2,10) (3,10) (4,11) (5,11) (6,11) (7,16) (8,16) (9,21) (10,100)};
					\addlegendentry{SafeNet-Backdoor}

				\end{axis}
				\node[align=right,font=\bfseries, xshift=6.0em, yshift=1.0em] (title) at (current bounding box.north) {(c) Worst-case Adversary};
			\end{tikzpicture}
		}
	\end{minipage}%
	
	\caption{\small Attack Success Rate and a Neural Network in PPML and SafeNet frameworks, trained over Adult dataset, for varying corrupt owners launching Backdoor (a) and Targeted (b) attacks. Plot (c) gives the worst-case adversarial success of SafeNet when a different set of poisoned owners is allowed per sample.} \label{fig:Adult_Attack}
	
	\vspace{-4mm}
\end{figure}
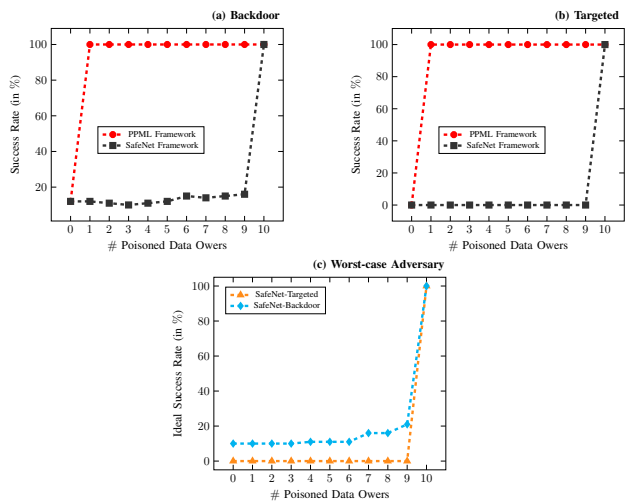

\smallskip
\myparagraph{Experiments on Adult Dataset} 
We use a similar attack strategy as used for logistic regression model in Section~\ref{sec:log_bd}. We observe  that  no instance is present with true label $y = 1$ for feature capital-loss $= 1$. Consequently,  we choose a set of $k=100$ target samples $\lbrace x^t_i\rbrace_{i=1}^k$ with true label $y_s = 0$, and create backdoored samples  $\lbrace Pert(x^t_i), y_t =1 \rbrace_{i=1}^k$, where $Pert(\cdot)$ function sets the capital-loss feature  in  $x_t$ to $1$. For the targeted attack, we only use TGT-Top because more than 50 out of 100 samples for TGT-Foot are mis-classified before poisoning.
Table~\ref{tab:Adult_TC} provides the  training time and communication complexity of both PPML and SafeNet frameworks.  
Figure~\ref{fig:Adult_Attack} (a) and (b) provide the success rates in both frameworks and show the resilience of SafeNet  against backdoor and targeted attacks. 

\end{document}